\newtheorem{theorem}{Theorem}[]
\newtheorem{corollary}{Corollary}[]
\newtheorem{remark}{Remark}[]
\newtheorem{lemma}[]{Lemma}
\newtheorem{definition}{Definition}
\newcommand{\vast}{\bBigg@{3}}
\newcommand{\Vast}{\bBigg@{4}}
\begin{document}

\title{Fine Grained Analysis and Optimization of Large Scale Automotive Radar Networks}

\author{
\IEEEauthorblockN{Mohammad~Taha~Shah}, {\it Graduate Student Member, IEEE}, \IEEEauthorblockN{Gourab~Ghatak}, {\it Member, IEEE}, and \IEEEauthorblockN{Shobha~Sundar~Ram}, {\it Senior Member, IEEE}
\thanks{M.T. Shah is with the Bharti School of Telecommunication Technology and Management, IIT Delhi, New Delhi, India, 110016; Email: tahashah@dbst.iitd.ac.in. G. Ghatak is with the Department of Electrical Engineering, IIT Delhi, New Delhi, India, 110016; Email: gghatak@ee.iitd.ac.in. S. S. Ram is with the Department of Electronics and Communication Engineering, IIIT Delhi, New Delhi, India, 110020; Email: shobha@iiitd.ac.in}
\vspace{-1cm}}

\maketitle

\begin{abstract}
Advanced driver assistance systems (ADAS) enabled by automotive radars have significantly enhanced vehicle safety and driver experience. However, the extensive use of radars in dense road conditions introduces mutual interference, which degrades detection accuracy and reliability. Traditional interference models are limited to simple highway scenarios and cannot characterize the performance of automotive radars in dense urban environments. In our prior work, we employed stochastic geometry (SG) to develop two automotive radar network models: the Poisson line Cox process (PLCP) for dense city centers and smaller urban zones and the binomial line Cox process (BLCP) to encompass both urban cores and suburban areas. In this work, we introduce the meta-distribution (MD) framework upon these two models to distinguish the sources of variability in radar detection metrics. Additionally, we optimize the radar beamwidth and transmission probability to maximize the number of successful detections of a radar node in the network. Further, we employ a computationally efficient Chebyshev-Markov (CM) bound method for reconstructing MDs, achieving higher accuracy than the conventional Gil-Pelaez theorem. Using the framework, we analyze the specific impacts of beamwidth, detection range, and interference on radar detection performance and offer practical insights for developing adaptive radar systems tailored to diverse traffic and environmental conditions.
\end{abstract}

\begin{IEEEkeywords}
Stochastic geometry, Automotive radar, Poisson line Cox process, Binomial line Cox process, Meta distribution.
\end{IEEEkeywords}

\section{Introduction}
\label{sec:intro}
Advanced driver assistance systems (ADAS) use automotive radars to improve safety through features such as adaptive cruise control, obstacle detection, and blind spot monitoring~\cite{bilik2019rise,lu2014connected}. The proliferation of these radars, however, introduces mutual interference between vehicles deteriorating the overall target detection accuracy~\cite{goppelt2010automotive, alland2019interference}. Traditionally, automotive radar interference has been studied using simple distributions of vehicles on highways using ray-tracing~\cite{schipper2015simulative}. Similarly, recent works have employed \ac{SG} to model automotive radar networks on highways, analyzing detection performance in terms of mean \ac{SIR}~\cite{al2017stochastic}. However, these models do not accurately emulate complex urban environments with diverse street geometries and vehicular conditions. Accurate spatial distribution models of automotive radar vehicles are essential for optimizing radar network performance and effectively mitigating mutual interference. 

In our previous work, we considered two \ac{SG}-based frameworks for modeling complex urban environments: the doubly stochastic homogeneous \ac{PLCP} and the non-homogeneous \ac{BLCP}~\cite{shah2024modeling}. 
Here, the line processes model the random distribution of streets within an area, while point processes model the random distribution of vehicles on a particular street. 
The \ac{PLCP} has uniform street and vehicular densities, suitable for small local regions within a city, while the \ac{BLCP} accounts for the heterogeneity of vehicular densities in areas with hierarchical street structures (e.g., highways, local streets, intersections), making it more appropriate for citywide scenarios.
These models enable researchers to derive analytical expressions for the average detection probability of all the nodes in the network based on the \ac{SIR} distribution~\cite{shah2024modeling}. 

For a deeper understanding, we must examine the detection performance of an individual radar rather than the average performance of all radars in the network. In particular, the simplistic \ac{SIR} distribution combines all the random factors affecting a network, such as the density of interfering radars, channel fading conditions, etc., making it infeasible to isolate the impact of individual random elements on the overall radar detection performance. 
To address these limitations, the authors introduced the concept of \ac{MD} in~\cite{haenggi2021meta,haenggi2021meta2}, which separates the different sources of randomness. A key challenge is the fact that closed-form expressions for the \ac{MD} can only be derived for some simple scenarios, such as when considering signal power or \ac{SIR}~\cite{haenggi2015meta}, using the Gil-Pelaez theorem~\cite{gil1951note}. However, for more complex models, it is generally infeasible to directly obtain analytic expressions. 
Thanks to the recent research presented in~\cite{wang2023fast}, several methods are now available for reconstructing MDs from the moments of \ac{CSP}.

{\bf Related Works:} Modeling and characterization of radar interference has a rich literature, e.g., see~\cite{goppelt2011analytical, schipper2015simulative, al2017stochastic, munari2018stochastic, chu2020interference}. In~\cite{brooker2007mutual}, the authors demonstrated significant degradation in detection due to the interference, while authors in~\cite{goppelt2011analytical} discussed the impact of ghost targets and reduced sensitivity from \ac{FMCW} radars on radar performance. Schipper {\it et al}~\cite{schipper2015simulative} analyzed radar interference using traffic flow patterns to model vehicle distribution along roadways. The authors in~\cite{xu2017interference} defined the spectral density distribution of orthogonal noise waveforms using an optimized Kaiser function and phase retrieval technique to reduce mutual interference between automotive radars. 

More recent efforts exploited \ac{SG} techniques to characterize radar interference across varying radar distributions and propose mitigation strategies. 
For instance,~\cite{ghatak2022radar, al2017stochastic} utilized a \ac{PPP} to model vehicular one-dimensional radar distribution on highways to evaluate mean \ac{SIR}. Similarly,~\cite{munari2018stochastic} employed the strongest interferer approximation method to analyze radar detection range and false alarm rates. In~\cite{fang2020stochastic}, radar detection probability is derived for targets with fluctuating radar cross-sections (RCS) modeled with Swerling-I and Chi-square frameworks. Most of these analyses focus on two-lane traffic scenarios; however, multi-lane interference was addressed in~\cite{chu2020interference} through a marked point process model. 
Using a \ac{PPP} model, the study in~\cite{huang2019v2x} evaluated interference scenarios and proposed a centralized framework leveraging \ac{V2X} communication to allocate spectrum resources and minimize radar interference. Likewise, authors in~\cite{zhang2020vanet} introduce a vehicular ad hoc networking (VANET)-assisted scheme that employs a \ac{TDMA}-based \ac{MAC} protocol to coordinate radar spectrum access among vehicles, enhancing interference mitigation. A distributed networking protocol for wireless radar control and interference mitigation was proposed in~\cite{aydogdu2019radchat}. 
In~\cite{wang2023performance}, the authors introduced multiple spectrum-sharing strategies to mitigate interference in a vehicular radar network modeled as an \ac{MHCP} model in two-lane and multi-lane scenarios. In~\cite{wang2023performance_2} authors proposed a \ac{TDMA} scheme for coordinated interference mitigation, evaluated in terms of delay, interference-free radar capacity, interference probability,
and control overhead.

The traditional metrics in \ac{SG} are limited to providing an expected view of the system performance. In this regard, the \ac{MD} framework has been widely utilized to study fine-grained aspects of wireless communication networks,~\cite{salehi2017analysis, wang2017sir, elsawy2017meta, saha2020meta, shi2021meta, wang2018sir, sun2023fine, qin2023downlink, qin2023uplink} ranging from cellular networks, low earth orbit (LEO) satellite networks and UAV assisted networks. 
The authors in~\cite{jeyaraj2021transdimensional} proposed a trans-dimensional \ac{PPP} (TPPP) model to analyze complex vehicular communication networks while accounting for street geometry and fading conditions. 
Authors in~\cite{feng2020separability} investigated the \ac{SIR} \ac{MD} to understand the link performance in interference-limited wireless networks. They demonstrate that in Poisson networks, with independent fading and path loss, the \ac{SIR} \ac{MD} can be expressed as a product of the \ac{SIR} threshold ($\beta$) and a function of reliability ($t$) within a specific region, i.e., $(\beta, t)$ defined by the fading statistics. Despite its widespread use in analyzing cellular and ad-hoc networks, the application of \ac{MD} in studying automotive radar performance is limited, especially in complex urban geometries. 

\begin{figure}[t]
\centering
\includegraphics[trim={0cm 1cm 0cm 0cm},clip,width=0.5\textwidth]{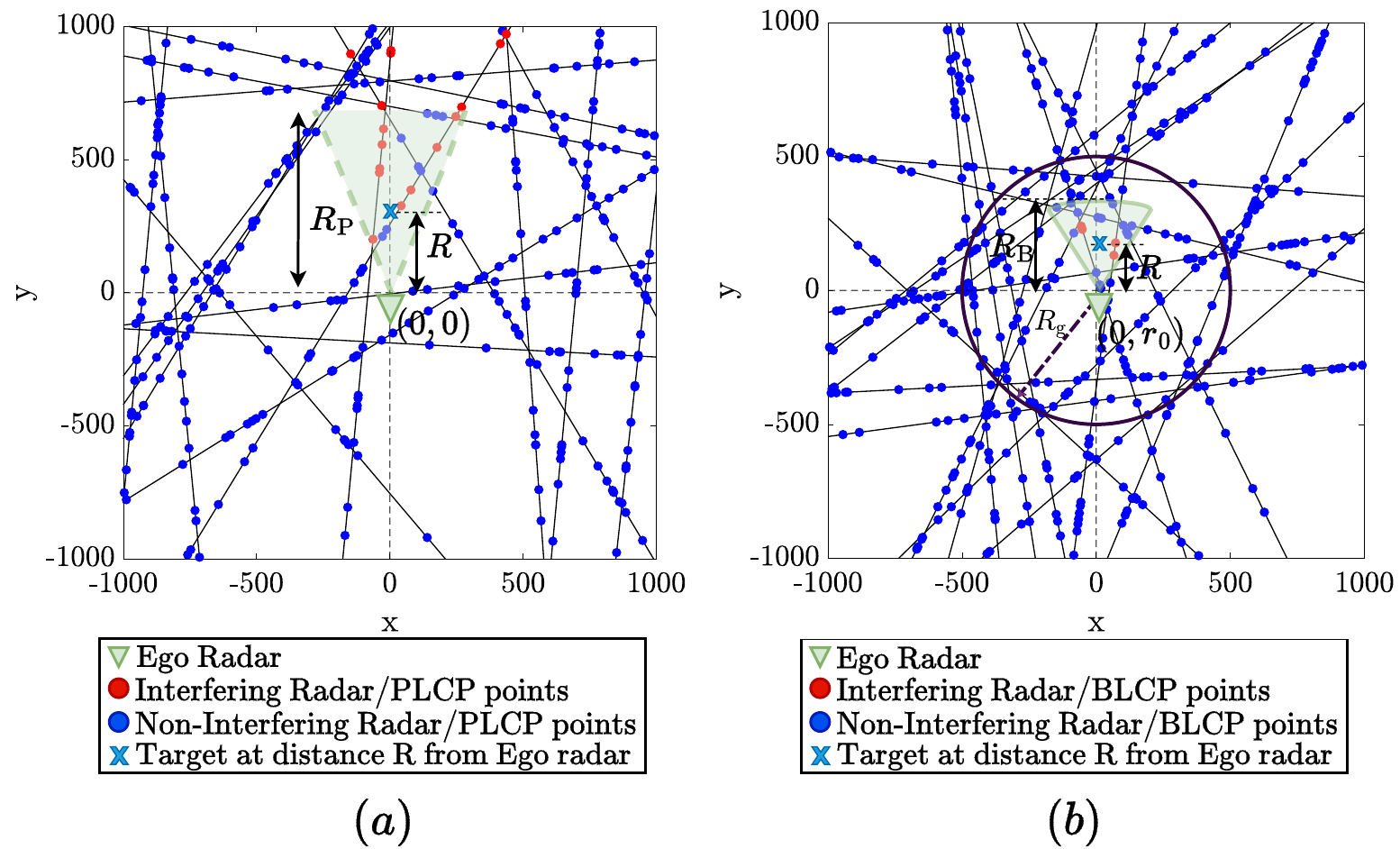}
\caption{(a) A realization of \ac{PLCP} having $\lambda_{\rm L} = 0.005 \,\rm{m}^{-2}$ and $\lambda = 0.005 \,\rm{m}^{-1}$, with ego radar present at origin, and (b) A realization of \ac{BLCP} having $n_{\rm B} = 50$ and $\lambda = 0.005 \,\rm{m}^{-1}$, with ego radar present at $(0,r_0)$.}
\label{fig:fig_1} 
\vspace{-0.2cm}
\end{figure}

\vspace{-0.2cm}
\subsection{Contributions, Organization, and Notation}
In this study, we characterize two SG models tailored for distinct urban contexts for automotive radars: (i) \ac{PLCP} model for densely populated city centers and smaller urban regions, and (ii) the \ac{BLCP} model, which encompasses both central and suburban areas to offer a comprehensive view of the city. For each model, we derive analytical frameworks to evaluate the number of successful detections by an ego radar, the corresponding detection probability, and optimal system parameters to maximize the detection performance.
The major contributions of this work are summarized as follows:
\begin{itemize}
    \item We first derive the expected number of successful detections by an ego radar within both the \ac{PLCP} and the \ac{BLCP} frameworks. This is achieved by utilizing the average lengths of \ac{PLP} and \ac{BLP} lines present within the bounded radar sector, providing a robust foundation for understanding radar performance. Subsequently, we propose an optimization framework to calculate the optimal beamwidth.
    \item We derive an analytical expression of the \ac{MD} of \ac{SF} and \ac{SIR}. Subsequently, we reconstruct the \ac{MD} using moments leveraging techniques such as the Chebyshev-Markov (CM) bound method. This approach offers insights into the individual detection performance of vehicles, enhancing our understanding of radar effectiveness in real-world scenarios.
    \item Utilizing the first negative moment of \ac{SF} \ac{MD}, we derive metrics for the first successful detection, leveraging which we optimize transmission probability.
    This optimization benefits cognitive radars, enabling adaptive adjustments to transmission strategies based on real-time environmental conditions and interference levels.
\end{itemize}

In section~\ref{sec:sys}, we first introduce the network geometry and define the channel model along with the channel access scheme. In Section~\ref{sec:AvgTar}, we derive the average number of potential targets that lie within the radar sector for both the \ac{PLCP} and \ac{BLCP} models. In section~\ref{sec:meta}, we derive the \ac{MD} of \ac{SF} 
and define the \ac{MD} reconstruction process. In section~\ref{sec:Results}, we plot all the numerical results highlighting the optimization strategies and the insights a network operator can draw from them. Finally, the paper concludes in section~\ref{sec:con}.
To differentiate the notations between the two Cox process models, we use subscript of `$k = {\rm P}$' for \ac{PLCP} and `$k = {\rm B}$' for \ac{BLCP} models. For example, line processes are identified using calligraphic letters such as $\mathcal{P}$, thus $\mathcal{P}_{\rm P}$ denotes \ac{PLP} and $\mathcal{P}_{\rm B}$ denotes \ac{BLP}. Likewise, point processes are identified using the symbol $\Phi_{\rm P}$ for \ac{PLCP}, and $\Phi_{\rm B}$ for \ac{BLCP}. A \ac{BLP} usually consists of $n_{\rm B}$ lines. 

\vspace{-0.2cm}
\section{System Model and Network Geometry}
\label{sec:sys}
Urban street networks show variability in street density and orientation, posing challenges in modeling, analyzing, and optimizing automotive radar system performance. In our previous work~\cite{shah2024modeling} we highlighted the non-homogeneity of real-world road networks 
Table II of~\cite{shah2024modeling} illustrates a stochastic distribution of streets characterized by both density and orientation. To address this, we model the street network as a stochastic line process. Specifically, we use a \ac{PLP} to represent local city regions with relatively homogeneous street distributions
and a \ac{BLP} to capture the variations in street density between urban centers and suburbs.

\subsubsection{\ac{PLCP}} In the \ac{PLP} model, streets are represented as a set of random lines, $\mathcal{P}_{\rm P} = \{L_1, L_2, \dots\}$, in the two-dimensional $xy$ Euclidean plane. One instance of this distribution is shown in Fig.~\ref{fig:fig_1}(a). Each line $L_i$ is defined by its distance $r_i$ from the origin and the angle $\theta_i$ between its normal and the $x$-axis. The parameters $(\theta_i, r_i)$ correspond to a point $q_i$ in the domain space defined as $\mathcal{D}_{\rm P} \equiv [0, \pi) \times (-\infty, \infty)$, creating a one-to-one correspondence between lines in $\mathcal{P}_{\rm P}$ and points in $\mathcal{D}_{\rm P}$. The number of points in any subset, $S \subset \mathcal{D}_{\rm P}$, follows a Poisson distribution with parameter $\lambda_{\rm L} |S|$, where $|S|$ is the Lebesgue measure of $S$, and $\lambda_{\rm L}$ represents urban street density.
Without loss of generality, we position the ego radar at the origin, represented by the green triangle in the figure, with its antenna beam also shown in green. From Slivnyak’s theorem, conditioning on the location of a point in a \ac{PPP} is the same as an addition of a point at the origin to the \ac{PPP} in the $\mathcal{D}_{\rm P}$~\cite{haenggi2012stochastic}. This results in a \ac{PLP}, $\mathcal{P}_{{\rm P}_0} = \mathcal{P}_{\rm P} \cup L_0$, where $L_0$ corresponds to $(\theta_0, r_0) = (0,0)$ and denotes the street/line containing the typical ego radar located at the origin.

On each line $L_i$, we model vehicles as an independent 1D \ac{PPP}, $\Phi_{L_i}$, with intensity $\lambda$, representing vehicular density. These are represented by blue dots in the figure. This leads to the construction of \ac{PLCP}, $\Phi_{\rm P} = \bigcup_{L_i \in \mathcal{P}_{\rm P}} \Phi_{L_i}$ which is a doubly stochastic Poisson process. Following the Palm distribution of the \ac{PLCP}, the point process is given by $\Phi_{{\rm P}_0} = \Phi_{\rm P} \cup \Phi_{L_0}$, where the line $L_0$ accounts for the traffic moving in the opposite direction (towards the ego vehicle) on the street where the ego vehicle is situated. 
Interference occurs when the ego radar and another vehicle's radar are within each other’s beam sectors (to be defined soon). The interfering radars are indicated by red dots in the figure. 

\subsubsection{\ac{BLCP}}
The BLP model considers the network of streets as a finite set of lines, $\mathcal{P}_{\rm B} = \{L_1, L_2, \dots, L_{n_{\rm B}}\}$, in the 2D Euclidean plane. One such instance is shown in Fig.~\ref{fig:fig_1}(b). Each line $L_i$ corresponds to a point, $(\theta_i, r_i) \in \mathcal{D}_{\rm B} \equiv [0, \pi) \times [-R_g, R_g]$, where $\mathcal{D}_{\rm B}$ is the domain space of \ac{BLP}. Note that, unlike the PLP case, the generating points are restricted to a disk of radius $R_g$ centered at the origin. The vehicle positions along each line $L_i$ are modeled as independent 1D PPPs, $\Phi_{L_i}$, with intensity $\lambda$. The complete distribution forms \ac{BLCP} defined as $\Phi_{\rm B} = \bigcup_{i=1}^{n_{\rm B}} \Phi_{L_i}$. The overall process, accounting for vehicles in the opposite direction of the ego vehicle's street, is represented as $\Phi_{{\rm B}_0} = \Phi_{\rm B} \cup L_0$. Unlike the homogeneous \ac{PLCP}, the inhomogeneous \ac{BLCP} cannot be characterized from a single typical point but depends on the distance from the origin. Thus, without loss of generality, we consider the ego radar to be located at $(0, r_0)$, where $r_0$ is its distance from the origin. The maximum unambiguous range of the radar is $R_{\rm B}$ (to be defined soon). As in the \ac{PLCP} scenario, interference occurs if the ego and interfering radars are within each other’s maximum unambiguous range and beam width.

Figures.~\ref{fig:fig_1}(a) and (b) highlight the stark contrast between the spatial structures of the \ac{PLCP} and \ac{BLCP}. In a \ac{PLCP}-modeled network, the ego radar experiences uniform interference characteristics regardless of its location. However, in a \ac{BLCP}-modeled network, the interference characteristics vary depending on the ego radar's position in the Euclidean plane. Specifically, in the \ac{BLCP}, an ego radar located near the city center (i.e., within the generating circle) will encounter different statistical features, such as street and intersection density, compared to those observed outside the city center.

\subsection{Interfering Set}
We assume that the radars are mounted on the front and rear of a vehicle and $\Omega$ is the half-power beamwidth. The orientation of vehicles on $L_i$ depends on the generating angle, $\theta_i$, and the direction of the vehicular movement. Thus, the boresight direction of any radar on $L_i$ is given by the two unit vectors: $\mathbf{a}$ and $-\mathbf{a}$, where $\mathbf{a} = (-\sin\theta_i, \cos\theta_i)^T$. 
Assume that any radar is located at $(x,y)$ such that $(x,y) \in \Phi_{k_0}$. Any interfering radar represented as a point in the Euclidean plane $(p,q) \in \Phi_{k_0}$, lies in the antenna's beam sector of radar located at $(x,y)$ if the angle made by the displacement vector between $(p,q) - (x, y)$ and the boresight direction exceeds $\cos \Omega$. Therefore, the radar sector is uniquely characterized by $\mathcal{R}^{+}_{(x,y), k}$ and $\mathcal{R}^{-}_{(x,y), k}$ as a function of $(x,y)$ and $\Omega$. Based on this, we define the \textit{closed} interior region of the radar sector for any Cox point as
\begin{definition}
\label{def:def_1}
The radar sector of a radar located at $(x,y) \in \Phi_{k_0}$ such that $x\cos\theta_i+y\sin\theta_i=r_i$ and $(\theta_i,r_i) \in \mathcal{D}_k$ for boresight direction $\mathbf{a}$ is
\begin{align*}
    \mathcal{R}^{+}_{(x,y), k} &= \Bigg\{(p,q) \in \mathbb{R}^2 \colon \frac{\big((p,q) - (x, y)\big) \cdot \mathbf{a}}{||(p,q) - (x, y)||} > \cos \Omega, \\
    &\hspace*{3.5cm} ||(p,q) - (x, y)|| \leq R_k \Bigg\}.
\end{align*}
\end{definition}
In the case of \ac{PLCP}, the radar sector of ego radar is characterized by $(\theta_i,r_i) = (0,0)$, and $(x,y) = (0,0)$ since the ego radar is situated at the origin with its street oriented towards the $y$-axis. For \ac{BLCP} the ego radar has $(x,y) = (0,r_0)$. The radar sectors formed by the automotive radar is \textit{bounded}. For both \ac{PLCP} and \ac{BLCP} analysis, we assume the radar has a maximum operational range $R_k$ where $k = {\rm P}$, or $k = {\rm B}$, meaning it cannot detect targets beyond this range. In real-world conditions, radar detection is also limited by blockages like buildings. Thus $R_k$ represents the line-of-sight (LOS) range, and any radar outside this range will not interfere with the ego radar’s detection performance. 
The ego radar at any position only experiences interference from the Cox points that lie within its radar sector, and the ego radar is in their radar sector. If the ego radar has $\mathbf{a}$ as the boresight direction, then only the radars with boresight direction $-\mathbf{a}$ can contribute to interference. We generalize the set of interfering points as follows:
\begin{definition}
\label{def:def_2}
For any $\Phi_{k_0}$, the set of Cox points $\Phi_{k}^{\rm I}$ that cause interference at the ego radar is
\begin{align*}
    &\Phi_k^{\rm I} = \bigcup_{(p,q) \in \Phi_{k_0}} \Bigg\{(p,q) \colon\\
    &\hspace*{1.5cm} \left(\mathbf{1} \left((0,r) \in \mathcal{R}^{+}_{(p,q),k}\right) + \mathbf{1} \left((0,r) \in \mathcal{R}^{-}_{(p,q),k}\right)\right) \\
    &\hspace*{0.5cm} \cdot \left(\mathbf{1} \left((p,q) \in \mathcal{R}^{+}_{(0,r),k}\right) + \mathbf{1} \left((p,q) \in \mathcal{R}^{-}_{(0,r),k}\right)\right) = 1 \Bigg\}.
\end{align*}
where $\mathbf{1}(\cdot)$ is the indicator function, $\mathcal{R}^{+}_{(p,q),k}$ is  closed interior region of any radar present at $(p,q)$, and $(0,r)$ is the location of ego radar, and $k \in \{{\rm P},{\rm B}\}$. In the case of \ac{PLCP}, the ego radar is at the origin, thus $r=0$, and in \ac{BLCP}, $r=r_0$.
\end{definition}
The above definition outlines the set of interfering automotive radars $\Phi_k^{\rm I}$, having coordinates $(p,q)$ that cause interference at the ego radar. If ego radar is within the radar sector of automotive radar present at $(p,q)$ i.e., $\mathbf{1} \left((0,r) \in \mathcal{R}^{+}_{(p,q),{\rm P}}\right) + \mathbf{1} \left((0,r) \in \mathcal{R}^{-}_{(p,q),{\rm P}}\right)$, and the automotive radar located at $(p,q)$ is in the interior region of ego radar i.e., $\mathbf{1} \left((p,q) \in \mathcal{R}^{+}_{(0,r),{\rm P}}\right) + \mathbf{1} \left((p,q) \in \mathcal{R}^{-}_{(0,r),{\rm P}}\right)$, only then radar at $(p,q)$ will be an element of $\Phi_k^{\rm I}$.

\subsection{Channel Access, \ac{SIR}, and \ac{SF}}
In our previous work~\cite{shah2024modeling}, we assumed that all the neighboring radars transmit continuously, which creates a very high probability of disturbance due to mutual interference. In civilian radar networks of high density, one consideration for tackling mutual interference is to enable vehicular radars to cognitively turn on transmissions based on ambient channel conditions. In this work, we study this concept by considering a simple ALOHA access protocol~\cite{abramson1970aloha}.  All the vehicles, including the ego radar, attempt to detect a target with probability $p$ in this scheme. We refer to $p$ as the transmission probability, as the vehicle needs to transmit the radar signal to be able to perform detection. Thus, each interference term is weighted using an indicator function $\mathbf{1} (\cdot)$ to denote whether the vehicle is transmitting or not. Let the set of locations of the interfering nodes be denoted by $\mathcal{C} \subset \Phi_{k}^{\rm I}$. 
The ego radar receives the reflected signal power from the target vehicle with strength
    $S = \gamma\sigma_{\mathbf{c}} P R^{-2\alpha}$,
where, $\gamma = \frac{G_{\rm t}}{(4\pi)^2}A_{\rm e}$,  $A_{\rm e}$ is the effective area of the receiving antenna aperture, $G_t$ is the gain of the transmitting antenna, $P$ is the transmit power, $\alpha$ is the path-loss exponent, and $\sigma_{\mathbf{c}}$ is fluctuating radar cross-section of target at distance $R$.
The interference at the ego radar from either \ac{PLCP} or \ac{BLCP} caused by a single interfering radar situated at a distance $w_k$ from the ego radar is expressed as
    $\mathbf{I}_k = P \gamma h_{\mathbf{w}_k} ||\mathbf{w}_k||^{-\alpha},$
where $h_{\mathbf{w}_k}$ is the fading power. Accordingly, the \ac{SIR} at the ego radar is
\begin{align}
    \xi_k = \frac{\gamma \sigma_{\mathbf{c}} P R^{-2\alpha}}{\sum_{\mathbf{w}_k \in \Phi_k}4\pi \gamma P h_{\mathbf{w}_k} ||\mathbf{w}_k||^{-\alpha}\textbf{1}({\bf x} \in \mathcal{C})}.
\end{align}
In the \ac{MD} analysis, we utilize \ac{SF} rather than \ac{SIR} as the main parameter. The justification for this choice is clarified in Section~\ref{sec:meta}.
\ac{SF} is defined as the ratio of signal power to total received power.
i.e., $\mathrm{SF} = \frac{S}{S+\mathbf{I}_{\rm P}}$, or $\mathrm{SF} = \frac{\rm SIR}{{\rm SIR} + 1}$. In this scheme, the \ac{SF} is defined as,
\begin{align}
    \hspace*{-0.2cm}\mathrm{SF}_k \!\!=\!\! \frac{\gamma \sigma_{\mathbf{c}} P R^{-2\alpha}}{\gamma \sigma_{\mathbf{c}} P R^{-2\alpha} \!+\! \sum_{\mathbf{w}_k \in \Phi_k}\! 4\pi \gamma P h_{\mathbf{w}_k}\! ||\mathbf{w}_k||^{-\alpha}\textbf{1}({\bf x} \in \mathcal{C})},
    \label{eq:eq_sf}
\end{align}
where $\mathcal{C}$ is the set of locations of active interfering vehicles.
\begin{figure}[t]
\centering
\includegraphics[trim={0cm 1cm 3.5cm 0.5cm},clip,width = 0.35\textwidth]{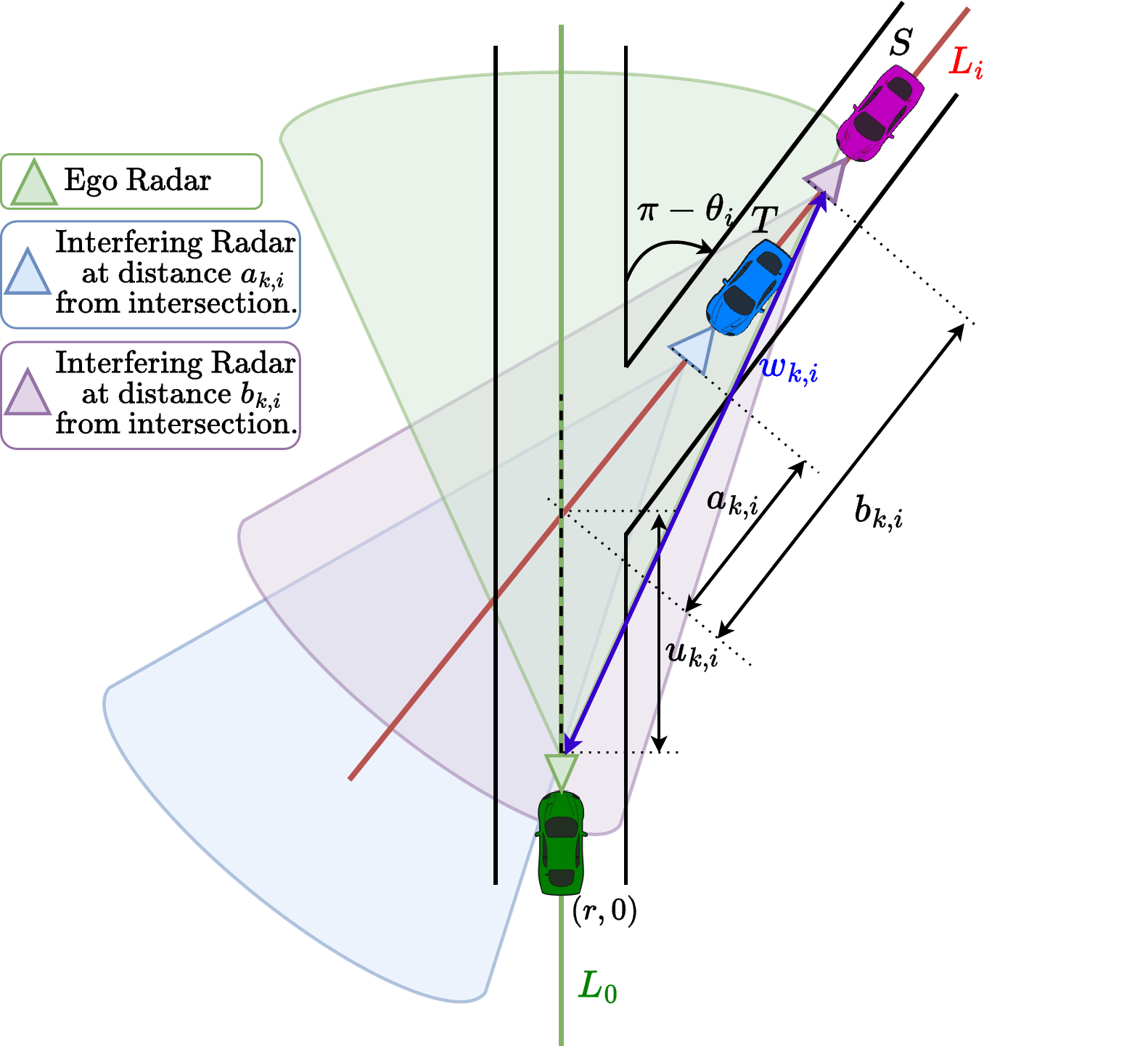}
\caption{Illustration of a scenario where two radars are present at the edge point of the line $L_i$ inducing interference.}
\label{fig:fig_3}
\end{figure}

\subsection{Interfering Distance}
We observed from Fig~\ref{fig:fig_1}. (a) and (b) that all the automotive radars emulated as Cox points do not cause interference (indicated by blue dots). Instead, interference is caused at the ego radar only when both the ego radar and the interfering radar are simultaneously within each other's radar sectors (indicated by red dots). The impact of the interfering radars located on the ego radar's street and the remaining streets on the detection performance of the ego radar is determined by the interfering distance $v_{{\rm B},i}$, \emph{the section of $L_i$ wherein if a radar is present, it will contribute to the interference experienced by the ego radar.} In order to derive $v_{{\rm B},i}$, we note that the distance from the ego radar to the intersection of the $L_i\textsuperscript{th}$ line is $d_i = u_i - r_0 = \frac{r_i}{\sin\theta_i} - r_0$, where the angle of intersection is equal to the generating angle $\theta_i$ of the intersecting line. The distance between the interfering radar on $L_i$ from the intersection point is $v_{k,i}$, referred to as the interfering distance. Figure~\ref{fig:fig_3} illustrates an ego radar with a green radar sector on $L_1$ and radars mounted on other vehicles, \textit{S} and \textit{T}, on $L_i$, with red and purple beams. Specifically, \textit{S} and \textit{T} are at a distance $v_{k,i} = b_{k,i}$ and $v_{k,i} = a_{k,i}$ from the intersection point, respectively. These two distances ($a_{k,i}$ and $b_{k,i}$) represent the bounds of $v_{k,i}$ within which another radar will interfere with the ego radar. From Fig.~\ref{fig:fig_3}, any other vehicle present behind the vehicle \textit{S} will not cause interference. Likewise, any vehicle present after vehicle \textit{T} also will not cause interference. This is because the vehicles present behind \textit{S} and after \textit{T} will not mutually interfere with ego radar. The values of $a_{k,i}$ and $b_{k,i}$ for the different scenarios are presented in Lemma 1 and Theorem 1 of~\cite{shah2024modeling} for \ac{PLCP} and \ac{BLCP} respectively. Correspondingly, the distance from the ego radar to the interfering radar is then given as
\begin{align}
    w_{k, i} &=
    \begin{cases}
        v_{k, i} & i=0 \nonumber \\
        \sqrt{(d_i + v_{k, i}|\cos\theta_i|)^2 + (v_{k, i}\sin\theta_i)^2}\; &  \mathrm{otherwise}. \nonumber
    \end{cases}
\end{align}

\begin{figure}[t]
    \centering
    \includegraphics[trim={3cm 1.5cm 0.5cm 1.5cm},clip,width=0.35\textwidth]{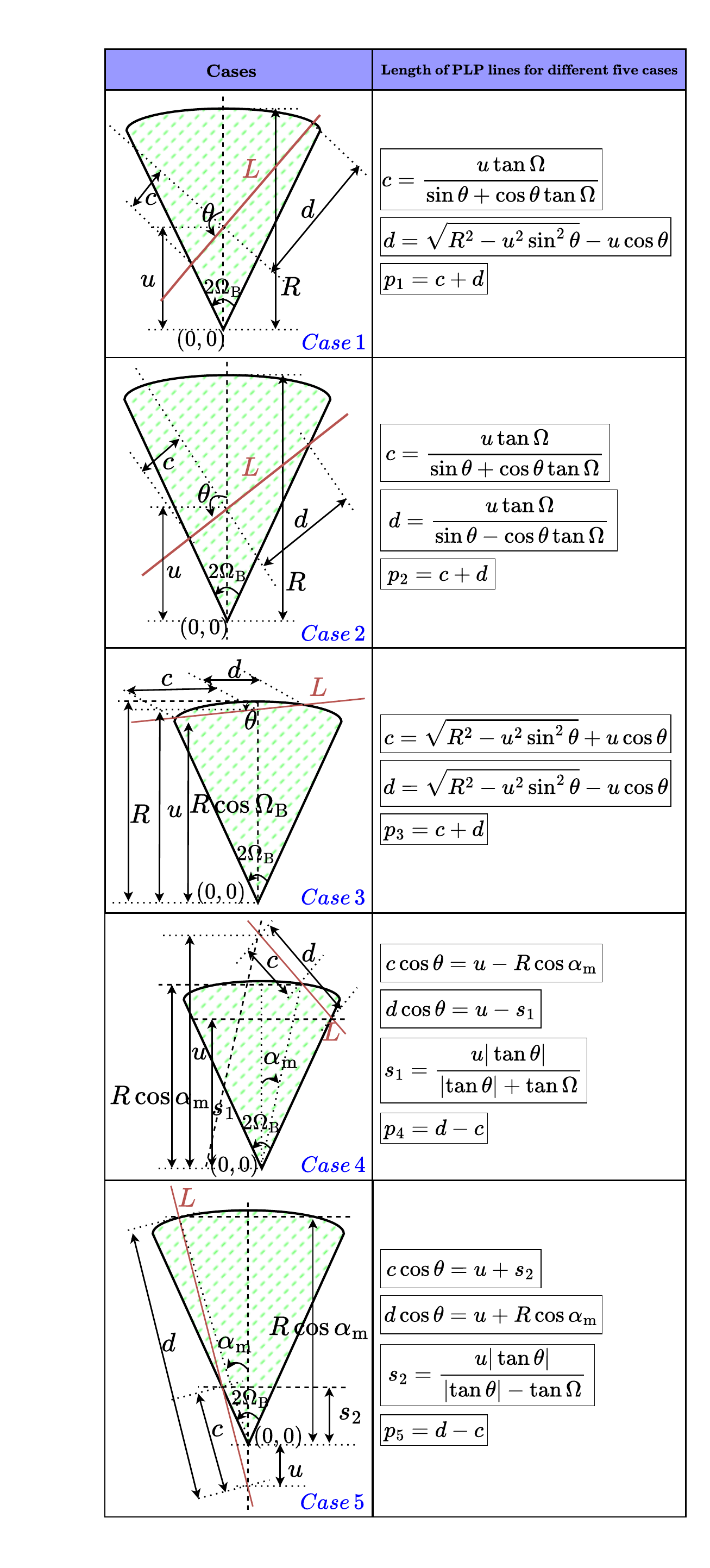}
    \caption{The figure illustrates different cases of a \ac{BLP} line intersecting line $L_0$. Case $1$ to $5$ correspond to 5 different cases to find $l$, i.e., from $l = p_1$ to $l = p_5$.}
    \label{fig:lemma_1_img}
\end{figure}

\section{Average number of potential targets
}
\label{sec:AvgTar}
In this section, we determine the average number of potential targets that are located within the radar antenna's main lobe. This involves characterizing the average number of Cox
points $n_k(R)$ in the beam sector of the ego radar with the distance to target as $R$. We identify $\mathcal{N}_{(0,r)}^{+}\!(R)$ as a collection of points $(p, q)$, lying within the target range of ego radar,
$\mathcal{N}_{(0,r)}^{+} (R) = \left\{(p,q) \in \mathbb{R}^2 \colon \frac{q}{\sqrt{p^2 + \left(q-r\right)^2}} > \cos \Omega, \sqrt{p^2 + \left(q-r\right)^2} \leq R \right\}$,
where $r=0$ in case of \ac{PLCP} and $r = r_0$ in case of \ac{BLCP}. Next, we note that $n_k(R)$  depends on the average length of \ac{PLP} or \ac{BLP} lines inside $\mathcal{N}_{(0,r)}^{+}\!(R)$. 
\begin{definition}
\label{def:def_3}
The average number of Cox points inside $\mathcal{N}_{(0,r)}^{+}\!(R)$ is
\begin{align}
    n_k(R) = \lambda l_k(R) = \lambda \mathbb{E}_{\mathcal{P}_{k_0}} \bigg[\left|L \cap \mathcal{N}_{(0,r)}^{+}(R)\right|_1\bigg],
    \label{eq:n_R}
\end{align}
where $|\cdot|_1$ is the Lebesgue measure in one dimension and $L$ is a line of the \ac{PLP} or \ac{BLP}.
\end{definition}

\subsection{Average number of \ac{PLCP} points in $\mathcal{N}_{(0,0)}^{+} (R)$}
First, we determine the average length of a single line within $\mathcal{N}_{(0,0)}^{+}\!(R)$, in the following Lemma~\ref{le:lemma_2}.
\begin{lemma}
\label{le:lemma_2}
The length of a line parameterized by $(\theta,r)$ present inside the radar sector $\mathcal{N}_{(0,0)}^{+}\!(R)$ of half beamwidth $\Omega$ and distance to target $R$ is,
\begin{align}
    l &= \nonumber\\
    &\hspace*{0cm}
    \begin{cases}
        p_0 \!=\! R_{\rm P}; \hspace*{0.5cm}\mathrm{for}\; \left(\theta , r\right) = \left(0, 0\right)\\
        p_1 \!=\! \frac{u\tan\Omega}{|\sin\theta| + |\cos\theta|\tan\Omega} + \sqrt{R^2 -u^2\sin^2\theta} - u|\cos\theta|; \\
        \hspace*{0.75cm}\mathrm{for}\; 0 \leq u \leq R, \;\mathrm{and}\; \theta \in  \big\{[0,\alpha_{\rm n}] \cup [\pi-\alpha_{\rm n}, \pi]\big\}\\
        p_2 \!=\! \frac{2 u \sin\theta \tan\Omega}{\sin^2\theta - \cos^2\theta\tan^2\Omega}; 
        \hspace*{0.5cm}\mathrm{for}\; 0 \leq u \leq R\cos\Omega, \\
        \hspace*{3.5cm}\mathrm{and}\; \theta \in [\alpha_{\rm n}, \pi-\alpha_{\rm n}] \\
        p_3 \!=\! 2\sqrt{R^2 -u^2\sin^2\theta}; \hspace*{0.6cm}\mathrm{for}\;  R\cos\Omega \leq u \leq R \\
        \hspace*{3.5cm}\mathrm{and}\; \theta \in [\alpha_{\rm n}, \pi-\alpha_{\rm n}] \\
        p_4 \!=\! \left(R \cos\alpha_{\rm m} - \frac{u|\tan\theta|}{|\tan\theta| + \tan\Omega}\right) |\sec\theta|; \\
        \hspace*{0.2cm}\mathrm{for}\;  R < u \leq R(\cos\Omega + |\cot\theta|\sin\Omega), \;\mathrm{and}\; \theta \in [0,\pi]\\
        p_5 \!=\! \left(R \cos\alpha_{\rm m} - \frac{u|\tan\theta|}{|\tan\theta| - \tan\Omega}\right) |\sec\theta|; \\
        \hspace*{0.1cm}\mathrm{for}\;  R(\cos\Omega - |\cot\theta|\sin\Omega) \leq u < 0 , \;\mathrm{and}\; \theta \in (\pi,2\pi]\\
        0; \hspace*{3.2cm}\mathrm{otherwise}
    \end{cases}
    \label{eq:lengthofline}
\end{align}
where $\alpha_{\rm n} = \arctan{\left(\frac{R\sin\Omega}{|R\cos\Omega - u|}\right)}$, and $\cos\alpha_{\rm m} = \frac{u}{R}\sin^2\theta + \sqrt{\frac{u^2}{R^2}(\sin^4\theta - \sin^2\theta) + \cos^2\theta}$.
\end{lemma}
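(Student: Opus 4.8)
The plan is to treat this as a planar computational-geometry problem: the region $\mathcal{N}_{(0,0)}^{+}(R)$ is the intersection of the disk of radius $R$ centered at the origin with the infinite cone of half-angle $\Omega$ whose axis is the positive $y$-axis. Its boundary has three pieces: the two edge rays $p = \pm q\tan\Omega$ (with $q \ge 0$), emanating from the origin in the directions $(\pm\sin\Omega,\cos\Omega)$, and the circular arc $p^2+q^2 = R^2$ subtended between them, the corners being $(\pm R\sin\Omega, R\cos\Omega)$. I would first record that a line with parameters $(\theta,r)$ can be written as $(x,y) = (-s\sin\theta,\, u + s\cos\theta)$, $s\in\mathbb{R}$, where $u = r/\sin\theta$ is its $y$-intercept and its perpendicular distance from the origin is $|r| = |u\sin\theta|$. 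The desired length is then $l = |s_{\mathrm{out}} - s_{\mathrm{in}}|$, with $s_{\mathrm{in}}, s_{\mathrm{out}}$ the parameter values at which the line enters and leaves the sector.

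Next I would compute, once and for all, the candidate intersection families. Equating the parameterization with $p = q\tan\Omega$ and with $p = -q\tan\Omega$ gives the edge-hit parameters $s = \mp u\tan\Omega / (\sin\theta \pm \cos\theta\tan\Omega)$, the origin of every $\tan\Omega$-term in the statement; subtracting the two and simplifying reproduces the chord between both edges, $p_2 = 2u\sin\theta\tan\Omega / (\sin^2\theta - \cos^2\theta\tan^2\Omega)$. Substituting the parameterization into $p^2+q^2 = R^2$ yields $s^2 + 2us\cos\theta + (u^2 - R^2) = 0$, hence the arc-hit parameters $s = -u\cos\theta \pm \sqrt{R^2 - u^2\sin^2\theta}$, which is the source of the $\sqrt{R^2 - u^2\sin^2\theta}$ terms and, when both endpoints are arc hits, of $p_3 = 2\sqrt{R^2 - u^2\sin^2\theta}$. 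The degenerate axis line $L_0$, with $(\theta,r)=(0,0)$, contributes its full radial extent $p_0 = R_{\rm P}$.

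The core of the argument is the case analysis deciding, for each $(\theta,u)$, which two candidates are the genuine entry/exit points. I would organize it around the two geometric thresholds in the statement. The angle $\alpha_{\rm n} = \arctan\big(R\sin\Omega / |R\cos\Omega - u|\big)$ is precisely the value of $\theta$ for which the line passes through a corner $(\pm R\sin\Omega, R\cos\Omega)$: for $\theta$ below $\alpha_{\rm n}$ the near endpoint is an edge hit and the far endpoint an arc hit, giving the mixed expression $p_1 = u\tan\Omega / (|\sin\theta| + |\cos\theta|\tan\Omega) + \sqrt{R^2 - u^2\sin^2\theta} - u|\cos\theta|$, whereas above $\alpha_{\rm n}$ both endpoints lie on the edges ($p_2$) or both on the arc ($p_3$), the split between these being governed by whether $u < R\cos\Omega$ or $u > R\cos\Omega$, i.e.\ whether the intercept sits inside or beyond the arc's chord. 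Cases $p_4$ and $p_5$ cover intercepts $u$ outside $[0,R]$, where the line merely grazes the sector; here $\alpha_{\rm m}$, defined via $\cos\alpha_{\rm m}$, marks the extreme angle at which such a line still meets the arc, and I would obtain the $|\sec\theta|$-scaled forms by projecting the admissible edge-to-arc segment onto the line direction.

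I expect the main obstacle to be bookkeeping rather than any single hard computation: verifying that the stated $(\theta,u)$ ranges are mutually exclusive and jointly exhaustive, and that the sign choices for $s$ in each regime make $l = |s_{\mathrm{out}} - s_{\mathrm{in}}|$ collapse to the nonnegative closed forms shown. The reflection symmetry $\theta \leftrightarrow \pi - \theta$ across the $y$-axis, together with the absolute values $|\sin\theta|$, $|\cos\theta|$, $|\sec\theta|$, should let me treat only $\theta \in [0,\pi/2]$ explicitly and extend by symmetry, which keeps the genuinely distinct subcases few. Confirming that $\alpha_{\rm n}$ and $\alpha_{\rm m}$ are the \emph{exact} crossover angles, not merely plausible ones, is the step I would check most carefully, since a misplaced threshold would assign a whole region of $(\theta,u)$ to the wrong branch.
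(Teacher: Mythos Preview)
Your proposal is correct and follows essentially the same strategy as the paper: both compute the chord by identifying which two of the three boundary pieces (the two edge rays and the circular arc) the line meets, and both organise the case split around the $y$-intercept $u$ and the orientation $\theta$, with $\alpha_{\rm n}$ serving as the corner-crossing threshold. Your treatment is somewhat more explicit algebraically (parameterising the line and solving for the hit parameters $s$), whereas the paper leans on Fig.~\ref{fig:lemma_1_img} and writes each $p_j$ as a sum or difference $c \pm d$ obtained ``from simple trigonometric operations''; one small inaccuracy worth fixing is your description of $\alpha_{\rm m}$ as an extreme angle for $\theta$---in fact $\cos\alpha_{\rm m}$ is the $y$-coordinate (divided by $R$) of the arc intersection point itself, i.e.\ $R\cos\alpha_{\rm m} = u\sin^2\theta + \cos\theta\sqrt{R^2 - u^2\sin^2\theta}$, which falls straight out of your own arc-hit formula $s = -u\cos\theta + \sqrt{R^2 - u^2\sin^2\theta}$ once you compute the corresponding $y = u + s\cos\theta$.
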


\begin{figure*}[b!] 
    \begin{align}
    \hline
        l_{\rm B}(R) =
        \begin{cases}
        \frac{n_{\rm B}}{2 R_{\rm g}} \Omega R^2; \mathrm{for}\; r_0 \in [-R_{\rm g}, R_{\rm g}-R] \\
        \frac{n_{\rm B}}{2R_{\rm g}}\bigg[R^2 \arcsin{\left(\frac{\rm y_B}{R}\right)} + {\rm y_B}\sqrt{R^2 - {\rm y^2_B}} + R^2 \arcsin{\left(\frac{\rm y_A}{R}\right)} + {\rm y_A}\sqrt{R^2 - {\rm y^2_A}} + R_{\rm g}^2 \arcsin{\left(\frac{\rm y_A}{R_{\rm g}}\right)} + \\
        \hspace*{0cm} {\rm y_A}\sqrt{R_{\rm g}^2 - {\rm y^2_A}} - 2r_0{\rm y_A} - m{\rm y^2_A}\bigg] + \frac{2n_{\rm B}}{\pi R_{\rm g}}\int_0^{\rm y_A} \int_{\sqrt{R^2 - x^2}}^{\sqrt{R_{\rm g}^2 - x^2}+r_0} \arcsin{\left(\frac{R_{\rm g}}{\sqrt{x^2+y^2}}\right)}\, {\rm d}y {\rm d}x; \\
        \hspace{1cm} \mathrm{for}\; r_0 \in \left(R_{\rm g}-R, \sqrt{R_{\rm g}^2 - \left(R \sin \Omega \right)^2} - R\cos\Omega\right]\\
        \frac{n_{\rm B}}{2R_{\rm g}}\bigg(R_{\rm g}^2 \arcsin{\left(\frac{\rm y_C}{R_{\rm g}}\right)} + {\rm y_C}\sqrt{R_{\rm g}^2 - {\rm y^2_C}} - m{\rm y^2_C} - 2r_0{\rm y_C}\bigg) + \frac{2n_{\rm B}}{\pi R_{\rm g}}\bigg[\int_0^{\rm y_C} \int_{\sqrt{R_{\rm g}^2 - x^2}}^{\sqrt{R^2 - x^2}+r_0} \arcsin{\left(\frac{R_{\rm g}}{\sqrt{x^2+y^2}}\right)} {\rm d}y {\rm d}x + \\
        \hspace*{3cm}\int_{\rm y_C}^{\rm y_B} \int_{mx+r_0}^{\sqrt{R^2 - x^2}+r_0} \arcsin{\left(\frac{R_{\rm g}}{\sqrt{x^2+y^2}}\right)}\, {\rm d}y {\rm d}x \bigg]; \mathrm{for}\; r_0 \in \left(\sqrt{R_{\rm g}^2 - \left(R \sin \Omega \right)^2} - R\cos\Omega, R_{\rm g}\right]\\
        \frac{2n_{\rm B}}{\pi R_{\rm g}} \int_{0}^{\rm y_B} \int_{mx+r_0}^{\sqrt{R^2 - x^2}+r_0} \arcsin{\left(\frac{R_{\rm g}}{\sqrt{x^2+y^2}}\right)}\, {\rm d}y {\rm d}x; \mathrm{for}\; r_0 \in \big\{\left(R_{\rm g},\infty\right) \cup \left(-\infty, -(R_{\rm g} + R)\right)\big\}\\
        \frac{n_{\rm B}}{2R_{\rm g}}\bigg[R^2 \arcsin{\left(\frac{\rm y_B}{R}\right)} + {\rm y_B}\sqrt{R^2 - {\rm y^2_B}}  + R_{\rm g}^2 \arcsin{\left(\frac{\rm y_C}{R_{\rm g}}\right)} + {\rm y_C}\sqrt{R_{\rm g}^2 - {\rm y^2_C}} + 2r_0{\rm y_C} - m\left({\rm y^2_B} - {\rm y^2_C}\right) \bigg] - \\
        \hspace*{3cm} \frac{2n_{\rm B}}{\pi R_{\rm g}} \int_0^{\rm y_C} \int_{\sqrt{R_{\rm g}^2 - x^2}}^{mx+r_0} \arcsin{\left(\frac{R_{\rm g}}{\sqrt{x^2 +y^2}}\right)} {\rm d}y {\rm d}x; \mathrm{for}\; r_0 \in \left[-\sqrt{R_{\rm g}^2 - \left(R \sin \Omega \right)^2} - R\cos\Omega, -R_{\rm g}\right)\\
        \frac{n_{\rm B}}{2R_{\rm g}}\bigg[R^2 \arcsin{\left(\frac{\rm y_A}{R}\right)} + {\rm y_A}\sqrt{R^2 - {\rm y^2_A}} + R_{\rm g}^2 \arcsin{\left(\frac{\rm y_A}{R_{\rm g}}\right)} + {\rm y_A}\sqrt{R_{\rm g}^2 - {\rm y^2_A}} + 2r_0{\rm y_A}\bigg] - \\
        \hspace*{0cm} \frac{2n_{\rm B}}{\pi R_{\rm g}} \bigg[\int_{0}^{-\rm y_A} \int^{mx+r_0}_{-\sqrt{R_{\rm g}^2 - x^2}} \arcsin{\left(\frac{R_{\rm g}}{\sqrt{x^2+y^2}}\right)}\, {\rm d}y {\rm d}x + \\
        \hspace*{1cm} \int_{-\rm y_A}^{\rm y_B} \int^{mx+r_0}_{-\sqrt{R^2 - x^2}+r_0} \arcsin{\left(\frac{R_{\rm g}}{\sqrt{x^2+y^2}}\right)}\, {\rm d}y {\rm d}x \bigg];  \mathrm{for}\; r_0 \in \left[-(R_{\rm g} + R), -\sqrt{R_{\rm g}^2 - \left(R \sin \Omega \right)^2} - R\cos\Omega\right)\\
    \end{cases}
    \label{eq:len_blcp}
    \end{align}
\end{figure*}

\begin{proof}
In order to determine the length of a line $L$ within the radar sector, we have to carefully consider different cases of generating angle $\theta$ and the intersecting distance $u$. Depending on point of intersection and $\theta$ there are 6 different cases to find $l$ corresponding to $p_j$ where $j = \{0,1,\dots,5\}$. The value of $l = p_0$ is due to $L_0$. Let us define four events based on the value of $\theta_i$, which are as $\mathrm{A_1} =  \left\{\theta \colon 0 \leq \theta \leq \frac{\pi}{2} \right\}$, $\mathrm{A_2} =  \left\{\theta \colon \frac{\pi}{2} \le \theta \leq \pi \right\}$, $\mathrm{A_3} =  \left\{\theta \colon  \pi \le \theta \leq \frac{3\pi}{2} \right\}$, and $\mathrm{A_4} =  \left\{\theta \colon \frac{3\pi}{2} \le \theta \leq 2\pi \right\}$, corresponding to the generating present in four quadrants. As ego radar is located at the origin, the intersection occurs ahead of the ego radar if $0 \leq \theta \leq \pi$, which corresponds to cases $1$-$4$, and the intersection occurs ahead of the ego radar if $\pi \le \theta \leq 2\pi$ which corresponds to case $5$. For case $1$, $L$ intersects one of the edge lines and circular curvature of the radar sector, while as in case $2$, $L$ intersects only the edge lines of the radar sector. In case $3$ line $L$ intersects only the arc of the sector. Case $4$ and $5$ are the same as case $1$, except that in case $4$, $L$ intersects the y-axis outside the radar sector and ahead of the ego radar, whereas, in case $5$, $L$ intersects the y-axis behind the ego radar as shown in Fig.~\ref{fig:lemma_1_img}. In this proof, we will focus only on case $1$, as the remaining cases have similar procedures to derive them. Consider for event $\mathrm{A_1}$, the value $p_1$ can be found as $p_1 = c+d$, where $c = \frac{u\tan\Omega}{\sin\theta + \cos\theta\tan\Omega}$ and $d = \sqrt{R^2 -u^2\sin^2\theta} - u\cos\theta$ follows from simple trigonometric operations. To determine $p_1$ for event $\mathrm{A_2}$ as shown in Fig.~\ref{fig:lemma_1_img}, we take the modulus of sine and cosine to accommodate the first two quadrants. Like wise for case $2$ and $3$, $p_j = c+d$, while as for case $4$ and $5$, $p_j = d-c$. By looking at Fig.~\ref{fig:lemma_1_img}, we can deduce that the remaining cases would follow a similar method of finding $c$ and $d$ while considering event $\mathrm{A_1}$ and $\mathrm{A_2}$ for cases $1$-$4$, and event $\mathrm{A_3}$ and $\mathrm{A_4}$ for case $5$.
\end{proof}

Following Lemma~\ref{le:lemma_2}, we derive the average length of $\ac{PLP}$ lines inside $\mathcal{N}_{(0,0)}^{+}\!(R)$ in Theorem~\ref{th:theo1}.
\begin{theorem}
\label{th:theo1}
In a \ac{PLP}, the average length of line segments present inside the bounded radar sector $\mathcal{N}_{(0,0)}^{+}\!(R)$  is
\begin{align}
    l_{\rm P} (R) = \mathbb{E}_{\mathcal{P}_{{\rm P}_0}} \bigg[\left|L \cap \mathcal{N}_{(0,0)}^{+}(R)\right|_1\bigg] = 2\pi\lambda_{\rm L}R\bar{l} + l_0, 
\end{align}
where
\begin{align*}
    \bar{l} &= \frac{1}{2\pi R} \vast[\int_0^{R}\!\! \int_0^{\alpha_{\rm n}} p_1\, {\rm d}\theta\, {\rm d}r + \int_0^{R\cos\Omega}\!\! \int_{\alpha_{\rm n}}^{\pi-\alpha_{\rm n}} p_2\, {\rm d}\theta\, {\rm d}r \\
    & + \int_{R\cos\Omega}^R\!\! \int_{\alpha_{\rm n}}^{\pi-\alpha_{\rm n}} p_3\, {\rm d}\theta\, {\rm d}r + \int_R^{R\left(\cos\Omega + |\cot\theta|\sin\Omega \right)}\!\! \int_{0}^{\pi}\\
    &\hspace*{1.8cm}p_4\, {\rm d}\theta\, {\rm d}r + \int_{R\left(\cos\Omega - |\cot\theta|\sin\Omega \right)}^0\!\! \int_{\pi}^{2\pi} p_5\, {\rm d}\theta\, {\rm d}r\vast].
\end{align*}
\end{theorem}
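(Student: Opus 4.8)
The plan is to compute the Palm expectation of Definition~\ref{def:def_3} by separating the contribution of the deterministic line $L_0$ from that of the random lines of the \ac{PLP}. Writing $\mathcal{P}_{{\rm P}_0} = \mathcal{P}_{\rm P} \cup L_0$ and splitting the sum, I would obtain $l_{\rm P}(R) = \mathbb{E}_{\mathcal{P}_{\rm P}}\big[\sum_i |L_i \cap \mathcal{N}_{(0,0)}^{+}(R)|_1\big] + |L_0 \cap \mathcal{N}_{(0,0)}^{+}(R)|_1$, where the final, deterministic term is exactly $l_0 = p_0$ from Lemma~\ref{le:lemma_2} evaluated at $(\theta,r)=(0,0)$. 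The remaining term is the mean of a sum of a measurable function over the points of a Poisson process on $\mathcal{D}_{\rm P}$, so the key probabilistic step is the Campbell mean-value theorem~\cite{haenggi2012stochastic}: since the intensity measure on $\mathcal{D}_{\rm P}$ is $\lambda_{\rm L}\,{\rm d}r\,{\rm d}\theta$, the expected sum equals $\lambda_{\rm L}\int_0^\pi\!\int_{-\infty}^{\infty} l(\theta,r)\,{\rm d}r\,{\rm d}\theta$, with $l(\theta,r)$ the per-line length supplied by Lemma~\ref{le:lemma_2}.

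The second step is to evaluate this integral. A line contributes a nonzero length only when it crosses the bounded sector, and since the sector lies inside the disk of radius $R$, every contributing line has perpendicular distance $|r|\le R$; hence the integrand is compactly supported, and Lemma~\ref{le:lemma_2} already partitions that support into the five regimes associated with $p_1,\dots,p_5$. I would therefore split the double integral into the five corresponding pieces. Because the case conditions of Lemma~\ref{le:lemma_2} are phrased in terms of the intercept $u=r/\sin\theta$ of the line with the $y$-axis, the natural move is to reparameterise each line by $u$, after which the limits $0\le u\le R$, $R\cos\Omega\le u\le R$, $R< u\le R(\cos\Omega+|\cot\theta|\sin\Omega)$, and so on become the outer integration bounds, while the reflection symmetry $\theta\mapsto\pi-\theta$ of the $|\sin\theta|,|\cos\theta|$ terms collapses the two end-arcs of case~$1$ onto the single range $[0,\alpha_{\rm n}]$. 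Collecting the five pieces reproduces the bracketed expression that defines $\bar l$. The prefactor then follows on noting that $2\pi\lambda_{\rm L}R = \lambda_{\rm L}\int_0^\pi\!\int_{-R}^{R}{\rm d}r\,{\rm d}\theta$ is exactly the expected number of \ac{PLP} lines meeting the disk of radius $R$, so writing the result as $2\pi\lambda_{\rm L}R\,\bar l$ exhibits $\bar l$ as the mean per-line length inside the sector; adding back $l_0$ yields the claim.

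I expect the main obstacle to be the geometric bookkeeping of the second step rather than any probabilistic subtlety, the latter being routine once Campbell's theorem is in place. Concretely, one must check that the five regions in $(\theta,u)$ tile the support of $l(\theta,r)$ with no overlaps or gaps, that the Jacobian of the $r\!\to\!u$ substitution is carried consistently through each piece, and that the ``behind the ego radar'' regime (case~$5$, with $u<0$ and $\theta\in(\pi,2\pi]$) is correctly attached to the Poisson part. Matching the boundary angle $\alpha_{\rm n}$ and the switch at $u=R\cos\Omega$ across adjacent pieces is precisely where care is needed; once these agree, the surviving integrals defining $\bar l$ are elementary.
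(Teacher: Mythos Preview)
Your proposal is correct and follows the same approach that the paper (implicitly) uses: Theorem~\ref{th:theo1} is stated in the paper without a separate proof, being an immediate consequence of Lemma~\ref{le:lemma_2} via exactly the steps you describe --- splitting off the deterministic $L_0$ term to obtain $l_0$, applying Campbell's theorem to the remaining \ac{PLP} lines, and then partitioning the $(\theta,r)$ integral according to the five geometric regimes of Lemma~\ref{le:lemma_2}. Your reading of $2\pi\lambda_{\rm L}R$ as the expected number of \ac{PLP} lines meeting the disk of radius $R$, so that $\bar l$ is the mean per-line chord length in the sector, is precisely the intended interpretation; the only thing to watch is that the paper's displayed limits in $\bar l$ are already written in the $u$-parameterisation (the variable labelled $r$ there plays the role of your $u$), so no additional Jacobian is introduced.
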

\begin{remark}
\label{rem:remark_1}
Utilizing the result of Theorem~\ref{th:theo1} in definition~\ref{def:def_3}, we observe that there is a linear relation between the average number of points in a radar sector and the intensity of the \ac{PLP}. The same linear relation can be seen with the intensity of vehicles on the road. Therefore the average number of targets increase linearly with $\lambda_{\rm L}$ and $\lambda$.
\end{remark}
\subsection{Average number of BLCP points in  $\mathcal{N}_{\left(0,r_0\right)}^{+} (R)$}
The average number of vehicles inside the ego radar's beam sector depends on the location of the radar due to the non-homogeneous nature of \ac{BLCP}. 
In order to derive $n_{\rm B}(R)$, first we derive the average length of \ac{BLP} lines in $\mathcal{N}^{+}_{\left(0, r_0\right)}(R)$. Unlike the \ac{PLP}, where the analysis follows a simpler approach of first determining the length of a single line inside the radar sector and then averaging the Poisson-distributed number of lines, the \ac{BLP} utilizes the result of Theorem 2 from~\cite{shah2024binomial} where the \textit{line length density} of \ac{BLP} is derived as a function of distance from the origin. The theorem states that:
\begin{lemma}
\label{le:radial_d}
For a \ac{BLP} having $n_{\rm B}$ lines within a disk of radius $R_{\rm g}$, the line length density at a distance $r$ from origin is,
\begin{align}
    \label{eq:lemm_3}
    \rho (r) {=} \begin{cases} 
    \frac{n_{\rm B}}{2R_{\rm g}},  &\mathrm{if}\, r \leq R_{\rm g}    \\
    \frac{n_{\rm B}}{\pi R_{\rm g}} \arcsin{\left(\frac{R_{\rm g}}{r}\right)}& \mathrm{if}\, r > R_{\rm g}.
    \end{cases}
\end{align}
\end{lemma}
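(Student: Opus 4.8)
The plan is to derive the length density for a \emph{single} uniformly generated line and then multiply by $n_{\rm B}$ by linearity of expectation, exploiting the rotational symmetry that the uniform angle induces. I would model each of the $n_{\rm B}$ lines of $\mathcal{P}_{\rm B}$ as being generated by an independent point $(\Theta, D)$ drawn uniformly from $\mathcal{D}_{\rm B} = [0,\pi) \times [-R_{\rm g}, R_{\rm g}]$, so that the signed perpendicular distance $D$ is uniform on $[-R_{\rm g}, R_{\rm g}]$ with density $1/(2R_{\rm g})$, independent of $\Theta$. The key geometric observation is that a point lying on such a line at signed arc length $t$ from the foot of the perpendicular sits at Euclidean distance $\sqrt{D^2 + t^2}$ from the origin. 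Consequently, the portion of the line contained in the annulus of radii $[r, r+\mathrm{d}r]$ depends only on $|D|$ and not on $\Theta$, which is precisely what forces the density $\rho$ to be a function of $r$ alone.

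First I would compute, for a fixed generating distance $D$, the length of the line falling inside the thin annulus at radius $r$. Writing $s = \sqrt{D^2 + t^2}$ gives $s\,\mathrm{d}s = t\,\mathrm{d}t$, hence $\mathrm{d}t = s/\sqrt{s^2 - D^2}\,\mathrm{d}s$; accounting for the two symmetric halves $t>0$ and $t<0$, the line contributes length $\tfrac{2r}{\sqrt{r^2 - D^2}}\,\mathrm{d}r$ to the annulus whenever $r > |D|$ and nothing otherwise. Next I would average this contribution over $D$. Because the integrand is even in $D$, the expected length in the annulus equals $\frac{2r}{R_{\rm g}}\big(\int_0^{\min(r, R_{\rm g})} \frac{\mathrm{d}D}{\sqrt{r^2 - D^2}}\big)\,\mathrm{d}r$, which the standard primitive $\int_0^a \mathrm{d}D/\sqrt{r^2 - D^2} = \arcsin(a/r)$ reduces to $\frac{2r}{R_{\rm g}}\arcsin\!\big(\min(r, R_{\rm g})/r\big)\,\mathrm{d}r$.

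To pass from length-in-annulus to length density, I would divide by the annular area $2\pi r\,\mathrm{d}r$, obtaining the per-line density $\tfrac{1}{\pi R_{\rm g}}\arcsin(\min(r, R_{\rm g})/r)$, and then multiply by $n_{\rm B}$. The case split in the statement emerges purely from the saturation of the upper limit of integration: when $r \le R_{\rm g}$ the limit is $r$, so $\arcsin(1) = \pi/2$ and $\rho(r) = n_{\rm B}/(2R_{\rm g})$, whereas when $r > R_{\rm g}$ the limit is $R_{\rm g}$ and $\rho(r) = \tfrac{n_{\rm B}}{\pi R_{\rm g}}\arcsin(R_{\rm g}/r)$. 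The two branches match continuously at $r=R_{\rm g}$ since $\arcsin(1)=\pi/2$, which is a useful sanity check.

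The main obstacle is conceptual rather than computational: one must correctly translate ``length of line contained in an annulus'' into a density via the differential change of variables $t \mapsto s$, and justify that the resulting quantity is independent of the orientation $\Theta$ so that averaging over $D$ alone suffices. Once that reduction is established, the remaining work is an elementary integral and a clean distinction between $r \le R_{\rm g}$ and $r > R_{\rm g}$.
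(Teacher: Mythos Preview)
Your derivation is correct: the change of variables $t\mapsto s=\sqrt{D^2+t^2}$ gives the annular contribution $2r/\sqrt{r^2-D^2}\,\mathrm{d}r$, averaging over $D$ uniform on $[-R_{\rm g},R_{\rm g}]$ produces the $\arcsin$, and dividing by the annular area $2\pi r\,\mathrm{d}r$ and scaling by $n_{\rm B}$ yields exactly the two branches of the stated $\rho(r)$.

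There is nothing to compare against, however, because the paper does not prove this lemma at all: it is imported verbatim as Theorem~2 of an earlier companion paper (reference~\cite{shah2024binomial}) and merely restated here. Your argument therefore supplies a self-contained proof where the present paper offers none. The approach you take---parametrize a single line by its foot-of-perpendicular arc length, exploit that $\Theta$ drops out by rotational symmetry, and reduce everything to a one-dimensional integral in $D$---is the natural one and almost certainly coincides with what the cited paper does, since there is essentially only one direct route to this formula.
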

The above lemma shows that $\rho (r)$ first remains constant up to $R_{\rm g}$, and then decreases as $\mathcal{O}(r)$, as $r \rightarrow \infty$. In other words, the street networks become less dense as we move towards the suburbs. We integrate the line length density for a bounded radar sector based on the location of ego radar, which
gives us the average length of \ac{BLP} lines inside $\mathcal{N}^{+}_{\left(0, r_0\right)}(R)$,
\begin{theorem}
\label{th:theo2}
In a BLP having $n_{\rm B}$ lines and generated within a disk of radius $R_{\rm g}$, the average length of line segments inside $\mathcal{N}^{+}_{\left(0, r_0\right)}(R)$ is given by $l_{\rm B}(R)$ in~\eqref{eq:len_blcp},
where
\begin{align*}
    &{\rm y_A} = \frac{\sqrt{4R_{\rm g}^2r_0^2 - \left(R_{\rm g}^2+r_0^2-R^2\right)^2}}{2r_0}, \quad m = \cot\Omega \\
    &{\rm y_B} = R\sin\Omega, \quad {\rm y_C} = \frac{-m r_0 \pm \sqrt{R_{\rm g}^2(1+m^2) - r_0^2}}{1+m^2}.
\end{align*}
\end{theorem}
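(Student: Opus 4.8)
The plan is to obtain $l_{\rm B}(R)$ as the area integral of the \ac{BLP} line length density $\rho$ of Lemma~\ref{le:radial_d} over the radar sector $\mathcal{N}^{+}_{(0,r_0)}(R)$. By the defining property of the line length density, the expected total length of \ac{BLP} lines intersected with any bounded region $\mathcal{A}$ equals $\int_{\mathcal{A}}\rho\!\left(\sqrt{x^2+y^2}\right){\rm d}x\,{\rm d}y$, since $\rho(r)$ is precisely the expected line length per unit area at radial distance $r$ from the origin. First I would therefore write
\[
l_{\rm B}(R) = \iint_{\mathcal{N}^{+}_{(0,r_0)}(R)} \rho\!\left(\sqrt{x^2+y^2}\right)\,{\rm d}x\,{\rm d}y,
\]
and substitute the two-piece expression for $\rho$ from~\eqref{eq:lemm_3}.

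The geometry is that $\mathcal{N}^{+}_{(0,r_0)}(R)$ is a circular sector of radius $R$ and half-angle $\Omega$ with apex at $(0,r_0)$ and axis along $+y$, whereas $\rho$ is piecewise radial about the \emph{origin}, switching form across the generating circle $x^2+y^2=R_{\rm g}^2$. The core of the argument is thus to partition the sector into the sub-region lying inside the disk of radius $R_{\rm g}$, where $\rho = n_{\rm B}/(2R_{\rm g})$ is constant, and the sub-region lying outside, where $\rho=\frac{n_{\rm B}}{\pi R_{\rm g}}\arcsin(R_{\rm g}/r)$. Over the inside part the integral of a constant reduces to the area of a region bounded by arcs of the two circles and by the sector edge lines $y = \pm x\cot\Omega + r_0$; using $\int_0^{c}\sqrt{a^2-x^2}\,{\rm d}x = \tfrac12\big(c\sqrt{a^2-c^2}+a^2\arcsin(c/a)\big)$ these areas close in the $\arcsin$-and-$\sqrt{\phantom{x}}$ terms of~\eqref{eq:len_blcp}, while the edge lines contribute the affine-in-$r_0$ terms and the $m\,{\rm y}^2$ terms with $m=\cot\Omega$. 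Over the outside part no elementary antiderivative of $\arcsin(R_{\rm g}/\sqrt{x^2+y^2})$ exists, so those contributions are retained as the residual double integrals in~\eqref{eq:len_blcp}.

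I would read off the integration limits from the intersection points of the sector boundary with the generating circle: subtracting $x^2+(y-r_0)^2=R^2$ from $x^2+y^2=R_{\rm g}^2$ yields the abscissa ${\rm y_A}$ of the arc–circle intersection, intersecting the edge line $y=mx+r_0$ with $x^2+y^2=R_{\rm g}^2$ gives the quadratic whose roots are ${\rm y_C}$, and the lateral extent of the sector arc fixes ${\rm y_B}=R\sin\Omega$. The main obstacle, and the bulk of the work, is the case analysis over the ego-radar position $r_0$: as $r_0$ sweeps the real line the overlap between the off-center sector and the origin-centered disk changes topology, and the breakpoints $r_0=R_{\rm g}-R$, $r_0=\sqrt{R_{\rm g}^2-(R\sin\Omega)^2}-R\cos\Omega$, $r_0=\pm R_{\rm g}$ and their reflections mark exactly the configurations at which a corner or the arc of the sector crosses (or becomes tangent to) the generating circle. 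For each regime one must decide which of ${\rm y_A},{\rm y_B},{\rm y_C}$ bound the constant-density versus the decaying-density sub-regions and assemble the matching line of~\eqref{eq:len_blcp}. The two extreme regimes serve as sanity checks bracketing the intermediate ones: for $r_0\in[-R_{\rm g},R_{\rm g}-R]$ the sector lies wholly inside the disk, so the density is constant and $l_{\rm B}(R)=\frac{n_{\rm B}}{2R_{\rm g}}\,\Omega R^2$ since the sector area is $\Omega R^2$, whereas for $r_0>R_{\rm g}$ the sector lies wholly outside and only the $\arcsin$ double integral survives.
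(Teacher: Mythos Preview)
Your proposal is correct and follows essentially the same approach as the paper: both compute $l_{\rm B}(R)$ by integrating the radial line length density $\rho$ of Lemma~\ref{le:radial_d} over the sector $\mathcal{N}^{+}_{(0,r_0)}(R)$, splitting the domain into the constant-density piece inside the disk of radius $R_{\rm g}$ and the $\arcsin$-density piece outside, and then performing the same case analysis on $r_0$ (fully inside, fully outside, and the intermediate partial-overlap configurations) with the intersection abscissae ${\rm y_A},{\rm y_B},{\rm y_C}$ determining the integration limits. Your identification of the breakpoints and of the two extreme regimes as sanity checks matches the paper's Cases~1 and~4 exactly.
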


\begin{proof}
    See Appendix~\ref{pr:thm_2}.
\end{proof}

\section{Meta distribution Analysis}
\label{sec:meta}
Traditional studies on radar performance analysis involved the derivation of the probability of missed detection and the probability of false alarms. But in this article, we study an automotive radar network modeled using stochastic spatial processes and use the \ac{MD} of the \ac{SIR} and \ac{SF} for fine-grained performance analysis. In~\cite{shah2024modeling}, authors relied on a single,
simplified distribution of the \ac{SIR}. However, this approach failed to capture the nuances of individual detection performance due to factors like fading and locations of the interfering vehicles. The \ac{MD} of the \ac{SIR} offers a more granular perspective by explicitly accounting for these different sources of randomness. Despite these advantages, the computation of \ac{MD} often requires focusing on the moments of the underlying conditional distribution. This section explores the application of the \ac{MD} for both \ac{PLCP} and \ac{BLCP} based radar framework. 
\subsection{Meta Distribution of SF}
Compared to the \ac{MD} of \ac{SIR}, the \ac{MD} of \ac{SF} offers a key advantage: \ac{SF} is supported on the compact interval $[0, 1]$ compared to the SIR \ac{MD}'s range that extends to positive infinity. The bounded support of \ac{SIR} \ac{MD} ensures that all positive moments of the SF always exist, eliminating the need for truncation techniques when analyzing its distribution. Our approach enables a detailed analysis of an individual automotive radar's performance within the network, focusing on per-link metrics like detection probability, delay, and performance variance. From~\eqref{eq:eq_sf}, the \ac{MD} of \ac{SF} is derived as,
\begin{align}
    F_{p_{\rm SF}, k} (t_{\rm SF}) &= \mathcal{P}_{{\rm M},k} \left(\beta_{\rm SF}, t_{\rm SF}\right) = \mathbb{P} \left(p_{{\rm SF},k}(\beta_{\rm SF}) \geq t_{\rm SF}\right) \nonumber\\
    &= \mathbb{P}\big(\mathbb{P} \left(\mathrm{SF}_k > \beta_{\rm SF} | \Phi_{k_0}\right) \geq t_{\rm SF}\big)
    \label{eq:eq_md}
\end{align}
where $\beta_{\rm SF} \in [0,1]$ is the SF threshold, and $t_{\rm SF}\in [0,1]$ is the \ac{CCDF} threshold or reliability threshold. The \ac{CCDF} of the random variable, $\mathbb{P} \left(\mathrm{SF}_k > \beta_{\rm SF} | \Phi_{k_0}\right)$, is the \ac{MD} of the \ac{SF}, given by
\begin{align}
    &p_{{\rm SF},k}(\beta_{\rm SF}) = \mathbb{P} \left(\mathrm{SF}_k > \beta_{\rm SF} | \Phi_{k_0}\right) = \nonumber\\
    & \mathbb{P}\left(\sigma_{\mathbf{c}} \geq \frac{\beta_{\rm SF}\sum_{\mathbf{w}_k \in \Phi_{k_0}}4\pi h_{\mathbf{w}_k} ||\mathbf{w}_k||^{-\alpha}\textbf{1}({\bf x} \in \mathcal{C})}{ R^{-2\alpha}(1-\beta_{\rm SF})}\; \bigg|\; \Phi_{k_0} \right) \nonumber\\
    &\hspace*{0cm}\overset{(a)}{=} \mathbb{E}_{h_{\mathbf{w}_k}} \!\!\left[
    \exp{\left(\frac{-\beta_{\rm SF}\sum_{\mathbf{w}_k \in \Phi_{k_0}}4\pi h_{\mathbf{w}_k} ||\mathbf{w}_k||^{-\alpha}\textbf{1}({\bf x} \in \mathcal{C})}{\bar{\sigma} R^{-2\alpha}(1-\beta_{\rm SF})} \right)}\right] \nonumber\\
    &\hspace*{0cm} = \prod_{\mathbf{w}_k \in \Phi_{k_0}} \!\!\!\!p \mathbb{E}_{h_{\mathbf{w}_k}} \left[\exp{\left(\frac{-\beta_{\rm SF} 4\pi \gamma P h_{\mathbf{w}_k} ||\mathbf{w}_k||^{-\alpha}}{\gamma \bar{\sigma}P R^{-2\alpha}(1-\beta_{\rm SF})} \right)}\right] + (1 - p) \nonumber\\
    &\overset{(b)}{=} \prod_{\mathbf{w}_k \in \Phi_{k_0}} \frac{p}{1 + \beta_{\rm SF}^\prime ||\mathbf{w}_k||^{-\alpha}} + (1 - p),
    \label{eq:eq_md2}
\end{align}
where $\beta_{\rm SF}^\prime = \frac{4\pi \beta_{\rm SF}}{\bar{\sigma}R^{-2\alpha}(1-\beta_{\rm SF})}$. Step (a) follows from the exponential distribution of $\sigma_{\mathbf{c}}$ having mean $\bar{\sigma}$, and step (b) follows by taking the expectation over fading $h_{\mathbf{w}}$.

Calculating the distribution of $p_{{\rm SF},k}(\beta_{\rm SF})$ directly is not feasible. Instead, the moments of the problem are calculated and subsequently converted into the distribution~\cite{wang2023fast}. Examining the instances of $p_{{\rm SF},k}(\beta_{\rm SF})$ offers a valuable understanding of the fluctuation and dispersion of effective detection of the target. For instance, it enables us to analyze the mean local delay, representing the expected number of transmissions required for successful detection. The following two lemma give the moments of $p_{{\rm SF},{\rm P}}(\beta_{\rm SF})$ and $p_{{\rm SF},{\rm B}}(\beta_{\rm SF})$.
\begin{lemma}
\label{le:lemm_md1}
PLCP: The $b$--th moment of $p_{{\rm SF},{\rm P}}(\beta_{\rm SF})$ where $b \in \mathbb{C}$ is given as 
\begin{align*}
    &M_{b,{\rm P}}(\beta_{\rm SF}) 
    \!=\! \exp\!\Bigg(-\lambda \int_{0}^{R_{\rm P}} 1 - \left(\frac{p}{1+\beta_{\rm SF}^\prime {v}_{\rm P}^{-\alpha}} + 1 - p\right)^b\!\! {\rm d}v_{\rm P} \\
    &\hspace*{2cm} -\lambda_{\rm L} \int_{\mathbb{R}^{+}} \int_0^{2\pi} 1 - \exp\bigg(-\lambda \int_{a_{\rm P}}^{b_{\rm P}} 1 - \\
    &\hspace*{1.5cm} \left(\frac{p}{1+\beta_{\rm SF}^\prime ||\mathbf{w}_{\rm P}||^{-\alpha}} + 1-p \right)^b \,{\rm d}v_{\rm P}\bigg){\rm d}\theta\, {\rm d}r\Bigg).
\end{align*}
where $||\mathbf{w}_{\rm P}||$ is distance to interfering radar. Also, $a_{\rm P}$, and $b_{\rm P}$ can be found using Lemma 2 in~\cite{shah2024modeling}.
\end{lemma}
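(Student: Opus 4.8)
The plan is to compute $M_{b,{\rm P}}(\beta_{\rm SF}) = \mathbb{E}_{\Phi_{{\rm P}_0}}\!\left[\left(p_{{\rm SF},{\rm P}}(\beta_{\rm SF})\right)^b\right]$ directly from the product form of the \ac{CSP} in~\eqref{eq:eq_md2}, applying the \ac{PGFL} in two stages and exploiting the doubly stochastic decomposition $\Phi_{{\rm P}_0} = \Phi_{L_0} \cup \Phi_{\rm P}$. First I would observe that because $p_{{\rm SF},{\rm P}}(\beta_{\rm SF})$ is valued in $[0,1]$, every complex moment is finite, so $M_{b,{\rm P}}$ is well defined for $b\in\mathbb{C}$; moreover each factor's base $\tfrac{p}{1+\beta_{\rm SF}^\prime\|\mathbf{w}_{\rm P}\|^{-\alpha}} + (1-p)$ lies in $(0,1]\subset(0,\infty)$, so raising the (a.s.\ convergent) product to the complex power $b$ via the principal branch simply raises each factor to the $b$-th power, making the integrand everywhere equal to $\big(\tfrac{p}{1+\beta_{\rm SF}^\prime\|\mathbf{w}_{\rm P}\|^{-\alpha}} + 1-p\big)^b$.

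Since $\Phi_{L_0}$ (the ego's own street carrying opposing traffic) is independent of the remaining Cox process $\Phi_{\rm P}$, the moment factorizes into a product of two expectations, yielding the two summands in the claimed exponent. For the $L_0$ contribution, the interferers form a one-dimensional \ac{PPP} of intensity $\lambda$ along the collinear street at distances $v_{\rm P}\in[0,R_{\rm P}]$ (for collinear opposing radars the mutual-beam condition reduces to lying within the operational range), so the \ac{PGFL} of a 1D \ac{PPP} gives $\exp\!\big(-\lambda\int_0^{R_{\rm P}} 1 - (\cdots)^b\,{\rm d}v_{\rm P}\big)$, the first term.

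For the $\Phi_{\rm P}$ contribution I would apply the nested \ac{PGFL} of the \ac{PLCP}. Conditioning on the line process $\mathcal{P}_{\rm P}$, the vehicles on each line $L(\theta,r)$ are an independent 1D \ac{PPP} of intensity $\lambda$, so the inner expectation is again a 1D \ac{PGFL}, but restricted to the interfering-distance segment $v_{\rm P}\in[a_{\rm P},b_{\rm P}]$ fixed by the mutual-beam-overlap condition of Definition~\ref{def:def_2} (with $a_{\rm P},b_{\rm P}$ taken from Lemma 2 of~\cite{shah2024modeling}) and with $\|\mathbf{w}_{\rm P}\|$ the associated ego-to-interferer distance; this produces the inner factor $\exp\!\big(-\lambda\int_{a_{\rm P}}^{b_{\rm P}} 1 - (\cdots)^b\,{\rm d}v_{\rm P}\big)$. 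Averaging this functional over the lines with the \ac{PGFL} of the \ac{PLP} (Poisson on $\mathcal{D}_{\rm P}$ of intensity $\lambda_{\rm L}$) gives $\exp\!\big(-\lambda_{\rm L}\!\int\!\!\int 1 - \exp(\cdots)\,{\rm d}\theta\,{\rm d}r\big)$, after the measure-preserving reparametrization of $\mathcal{D}_{\rm P}=[0,\pi)\times\mathbb{R}$ onto $[0,2\pi)\times\mathbb{R}^+$ so the limits match the stated form. Multiplying the $L_0$ and $\Phi_{\rm P}$ factors and combining exponents yields $M_{b,{\rm P}}(\beta_{\rm SF})$.

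The main obstacle is not the \ac{PGFL} machinery, which is routine, but correctly pinning down the interfering-distance limits $a_{\rm P},b_{\rm P}$ as functions of $(\theta,r)$: one must verify that only vehicles simultaneously inside the ego sector and containing the ego radar in their own sector are retained, so the inner integral captures exactly the mutual-interference segment of each line rather than the full chord. A secondary point requiring care is justifying the interchange of the complex power with the \ac{PGFL} expectation and the finiteness of the resulting integrals for complex $b$, both of which follow from the boundedness of the \ac{SF} in $[0,1]$.
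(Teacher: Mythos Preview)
Your proposal is correct and follows essentially the same route as the paper: start from the product form of $p_{{\rm SF},{\rm P}}$ in~\eqref{eq:eq_md2}, raise to the $b$-th power, and take the expectation over $\Phi_{{\rm P}_0}$ via the \ac{PGFL} of the reduced Palm \ac{PLCP}. The paper compresses your two-stage computation (1D \ac{PPP} on $L_0$, then nested \ac{PGFL} over $\mathcal{P}_{\rm P}$) into a single citation of the \ac{PLCP} Palm Laplace functional (Lemma~4.12 of~\cite{dhillon2020poisson}), but the substance is identical; your added remarks on the domain reparametrization and on well-definedness for complex $b$ are welcome elaborations that the paper omits.
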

\begin{proof}
From~\eqref{eq:eq_md2} we get,
\begin{align*}
    &M_{b,{\rm P}}(\beta_{\rm SF}) = \mathbb{E}_{\Phi_{{\rm P}_0}} \left[\prod_{\mathbf{w}_{\rm P} \in \Phi_{{\rm P}_0}} \frac{p}{1 + \beta_{\rm SF}^\prime ||\mathbf{w}_{\rm P}||^{-\alpha}} + (1 - p)\right] \\
    &\overset{(d)}{=} \exp\Bigg(-\lambda \int_{0}^{R_{\rm P}} 1 - \left(\frac{p}{1+\beta_{\rm SF}^\prime {v}_{\rm P}^{-\alpha}} + 1 - p\right)^b \,{\rm d}v_{\rm P} \\
    &\hspace*{2cm} -\lambda_{\rm L} \int_{\mathbb{R}^{+}} \int_0^{2\pi} 1 - \exp\bigg(-\lambda \int_{a_{\rm P}}^{b_{\rm P}} 1 - \\
    &\hspace*{1.5cm} \left(\frac{p}{1+\beta_{\rm SF}^\prime ||\mathbf{w}_{\rm P}||^{-\alpha}} + 1-p \right)^b \,{\rm d}v_{\rm P}\bigg){\rm d}\theta\, {\rm d}r\Bigg).
\end{align*}
Step (d) follows by applying the Laplace functional of the reduced \ac{PLCP} palm distribution (See Lemma 4.12~\cite{dhillon2020poisson}).
\end{proof}

\begin{lemma}
\label{le:lemm_md2}
BLCP: The $b$-th moment of $p_{{\rm SF},{\rm B}}(\beta_{\rm SF})$ where $b \in \mathbb{C}$ is given as 
\begin{align*}
    &M_{b,{\rm B}}(\beta_{\rm SF}) 
    = \exp\left(\!\!-\lambda \int_{0}^{R_{\rm B}}\!\!\! 1 - \left(\frac{p}{1+\beta_{\rm SF}^\prime {v}_{\rm B}^{-\alpha}} + 1 - p\right)^b \,{\rm d}v_{\rm B} \right)\\
    &\hspace*{0.5cm} \Bigg[\frac{1}{2\pi R_{\rm g}} \int_{0}^{R_{\rm g}} \int_0^{2\pi} \exp\bigg(-\lambda \int_{a_{\rm B}}^{b_{\rm B}} 1 - \\
    &\hspace*{1.2cm} \left(\frac{p}{1+\beta_{\rm SF}^\prime ||\mathbf{w}_{\rm B}||^{-\alpha}} + 1-p \right)^b \,{\rm d}v_{\rm B}\bigg){\rm d}\theta\, {\rm d}r\Bigg]^{n_{\rm B}-1}.
\end{align*}
where $||\mathbf{w}_{\rm B}||$ is distance to interfering radar. Also, $a_{\rm B}$, and $b_{\rm B}$ can be found using Theorem 1 in~\cite{shah2024modeling}.
\end{lemma}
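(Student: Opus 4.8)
The plan is to follow the \ac{PLCP} argument of Lemma~\ref{le:lemm_md1} step for step, replacing the Laplace functional of the Poisson line process by the corresponding functional of the \emph{binomial} line process; it is the fixed cardinality of the \ac{BLP} that will turn the second exponential of the \ac{PLCP} moment into the power $n_{\rm B}-1$. Starting from the product form of the conditional success probability in~\eqref{eq:eq_md2}, I would first write the $b$-th moment as
\begin{align*}
    M_{b,{\rm B}}(\beta_{\rm SF}) = \mathbb{E}_{\Phi_{{\rm B}_0}}\!\left[\prod_{\mathbf{w}_{\rm B} \in \Phi_{{\rm B}_0}} \left(\frac{p}{1+\beta_{\rm SF}^\prime ||\mathbf{w}_{\rm B}||^{-\alpha}} + 1 - p\right)^{b}\right],
\end{align*}
where the exponent $b$ may be pushed inside the product because each per-point factor is deterministic once $\Phi_{{\rm B}_0}$ is fixed.

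Next I would condition on the ego radar at $(0,r_0)$ and invoke the reduced Palm distribution of the \ac{BLCP}. Conditioning on the ego point fixes the street $L_0$ carrying it and leaves the remaining $n_{\rm B}-1$ lines as independent, uniformly generated lines over the domain $\mathcal{D}_{\rm B}$, while the vehicles on every line remain an independent 1D \ac{PPP} of intensity $\lambda$. By this independence the expectation factorises into the contribution of the ego line $L_0$ and the contribution of the $n_{\rm B}-1$ remaining lines; since the latter are identically distributed, their joint contribution collapses to a single per-line expectation raised to the power $n_{\rm B}-1$.

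It then remains to evaluate the two pieces via the \ac{PGFL} of a 1D \ac{PPP}. For the ego line the interferers sweep the segment $[0,R_{\rm B}]$, and the \ac{PGFL} reproduces exactly the leading exponential factor of the statement. For a generic remaining line with generating coordinates $(\theta,r)$, I would apply the same 1D \ac{PGFL}, now integrating the interferer contribution over the interfering segment $[a_{\rm B},b_{\rm B}]$ taken from Theorem~1 of~\cite{shah2024modeling}, and then average over the law of $(\theta,r)$; after reparameterising the domain as $[0,2\pi)\times[0,R_{\rm g}]$ the generating point is uniform with density $\tfrac{1}{2\pi R_{\rm g}}$, which yields the bracketed double integral, and raising it to the power $n_{\rm B}-1$ gives the claimed expression.

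The main obstacle is the one step that genuinely departs from the \ac{PLCP} case: justifying that the fixed number of \ac{BLP} lines forces the product over the non-ego lines to be a finite power rather than an exponential, and that the Palm conditioning removes precisely one line so that the exponent is $n_{\rm B}-1$ and not $n_{\rm B}$. A secondary source of care is the distance bookkeeping, i.e.\ expressing $\|\mathbf{w}_{\rm B}\|$ through the intersection geometry and pinning down the correct limits $a_{\rm B},b_{\rm B}$ for each line orientation, but this is inherited directly from~\cite{shah2024modeling} and does not require new work.
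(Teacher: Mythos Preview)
Your proposal is correct and follows essentially the same approach as the paper: start from the product in~\eqref{eq:eq_md2}, split the expectation over $\Phi_{{\rm B}_0}$ into the ego-line contribution and the independent contributions of the remaining $n_{\rm B}-1$ \ac{BLP} lines, apply the 1D \ac{PPP} \ac{PGFL} on each line, and then average the per-line factor over the uniform law of $(\theta,r)$. The paper's own proof is in fact much terser than yours---it simply writes down the expectation and then the final expression---so your explicit justification of the $n_{\rm B}-1$ exponent via the reduced Palm and of the product-to-power collapse via the i.i.d.\ structure of the \ac{BLP} is a welcome elaboration rather than a departure.
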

\begin{proof}
From~\eqref{eq:eq_md2} we get,
\begin{align*}
    &M_{b,{\rm B}}(\beta_{\rm SF}) \!=\! \mathbb{E}_{\Phi_{{\rm B}_0}} \left[\prod_{\mathbf{w}_{\rm B} \in \Phi_{{\rm B}_0}} \frac{p}{1 + \beta_{\rm SF}^\prime ||\mathbf{w}_{\rm B}||^{-\alpha}} + (1 - p)\right] \\
    &= \exp\left(-\lambda \int_{0}^{R_{\rm B}} 1 - \left(\frac{p}{1+\beta_{\rm SF}^\prime {v}_{\rm B}^{-\alpha}} + 1 - p\right)^b \,{\rm d}v_{\rm B} \right)\\
    &\hspace*{0.5cm} \Bigg[\frac{1}{2\pi R_{\rm g}} \int_{0}^{R_{\rm g}} \int_0^{2\pi} \exp\bigg(-\lambda \int_{a_{\rm B}}^{b_{\rm B}} 1 - \\
    &\hspace*{1.2cm} \left(\frac{p}{1+\beta_{\rm SF}^\prime ||\mathbf{w}_{\rm B}||^{-\alpha}} + 1-p \right)^b \,{\rm d}v_{\rm B}\bigg){\rm d}\theta\, {\rm d}r\Bigg]^{n_{\rm B}-1}.
\end{align*}
\end{proof}

\begin{remark}
Lemma~\ref{le:lemm_md1} and~\ref{le:lemm_md2}, provide the moments of \ac{MD} of \ac{SF} for the \ac{PLCP} and \ac{BLCP} frameworks respectively. Here, $b=1$ represents the first moment, and $b=-1$ represents the first negative moments, both having their respective significance in characterizing the system performance. The first moment $M_{1,k}\left(\beta_{\rm SF} = \frac{\beta}{1+\beta}\right)$ gives the detection probability that an attempted detection by the ego radar of the target located at a distance $R$ is successful. Using the first moment, one can also determine the number of successful detections $n_{{\rm D},k}$ of vehicles for a fixed target distance $R$ (more details in Section~\ref{sec:Results}-A). Likewise the first negative moment $M_{-1,k}\left(\beta_{\rm SF} = \frac{\beta}{1+\beta}\right)$ gives the mean local delay $\mathcal{M}_k = \frac{1}{p} M_{-1,k} \left(\frac{\beta}{1+\beta}\right)$ and it represents the delay in detecting a target (more details in Section~\ref{sec:Results}-B).
\end{remark}

In Section~\ref{sec:Results}, we will discuss the insights gained by studying these moments, but first, we derive the detection probability of ego radar for \ac{PLCP} and \ac{BLCP} modeled automotive radar networks using the \ac{MD}.

\subsection{Detection Success Probability}
The detection \textit{success probability} at a threshold $\beta$ is defined as the \ac{CCDF} of SIR, $p_{{\rm D},Kk}(\beta) = \mathbb{P}[\rm{SIR} > \beta]$. This represents the probability that an attempted detection by the ego radar of the target located at a distance $R$ is successful.
\begin{corollary}
For the network where the locations of the vehicles are modeled as \ac{PLCP}, ($\Phi_{{\rm P}_0}$) or \ac{BLCP} ($\Phi_{{\rm B}_0}$) and $\beta_{\rm SF} = \frac{\beta}{1 + \beta}$, the first moment, $M_{1,k}(\beta_{\rm SF})$, is the detection success probability of an ego radar given by
\begin{align}
    & \text{PLCP:}\, M_{1,{\rm P}}\left(\frac{\beta}{1+\beta}\right) = p_{{\rm D},{\rm P}}(\beta) = \nonumber\\
    &\exp\Bigg(-\lambda \int_{0}^{R_{\rm P}} 1 - \left(\frac{p}{1+\beta^\prime {v}_{\rm P}^{-\alpha}} + 1 - p\right) \,{\rm d}v_{\rm P} \nonumber\\
    &\hspace*{1.5cm} -\lambda_{\rm L} \int_{\mathbb{R}^{+}} \int_0^{2\pi} 1 - \exp\bigg(-\lambda \int_{a_{\rm P}}^{b_{\rm P}} 1 - \nonumber\\
    &\hspace*{1cm} \left(\frac{p}{1+\beta^\prime ||\mathbf{w}_{\rm P}||^{-\alpha}} + 1-p \right) \,{\rm d}v_{\rm P}\bigg){\rm d}\theta\, {\rm d}r\Bigg).
    \label{eq:cor_1}
\end{align}
and
\begin{align}
    &\text{BLCP:}\, M_{1,{\rm B}}\left(\frac{\beta}{1+\beta}\right) = p_{{\rm D},{\rm B}}(\beta) = \nonumber\\
    &\exp\left(-\lambda \int_{R}^{R_{\rm B}} 1 - \left(\frac{p}{1+\beta^\prime {v}_{\rm B}^{-\alpha}} + 1 - p\right) \,{\rm d}v_{\rm B} \right)\nonumber\\
    &\hspace*{0.5cm} \Bigg[\frac{1}{2\pi R_{\rm g}} \int_{0}^{R_{\rm g}} \int_0^{2\pi} \exp\bigg(-\lambda \int_{a_{\rm B}}^{b_{\rm B}} 1 - \nonumber\\
    &\hspace*{1.5cm} \left(\frac{p}{1+\beta^\prime ||\mathbf{w}_{\rm B}||^{-\alpha}} + 1-p \right) \,{\rm d}v_{\rm B}\bigg){\rm d}\theta\, {\rm d}r\Bigg]^{n_{\rm B}}.
    \label{eq:cor_2}
\end{align}
where $\beta^\prime = \frac{4\pi\beta}{\bar{\sigma} R^{-2\alpha}}$, and $\beta$ is the \ac{SIR} threshold.
\begin{proof}
The proof of the corollary is straightforward and follows by taking $b=1$ in Lemma~\ref{le:lemm_md1} and~\ref{le:lemm_md2}.
\end{proof}
\end{corollary}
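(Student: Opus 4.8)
The plan is to recognize that the corollary is simply the $b=1$ instance of the moment formulas in Lemma~\ref{le:lemm_md1} and Lemma~\ref{le:lemm_md2}, combined with a deterministic change of threshold that converts the \ac{SF} event into the \ac{SIR} event. By definition the $b$-th moment of the conditional success probability is $M_{b,k}(\beta_{\rm SF}) = \mathbb{E}_{\Phi_{k_0}}\!\big[p_{{\rm SF},k}(\beta_{\rm SF})^b\big]$, so setting $b=1$ leaves only the mean $\mathbb{E}_{\Phi_{k_0}}[p_{{\rm SF},k}(\beta_{\rm SF})]$ of the random variable defined in~\eqref{eq:eq_md2}.

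First I would invoke the tower property to remove the conditioning: since $p_{{\rm SF},k}(\beta_{\rm SF}) = \mathbb{P}(\mathrm{SF}_k > \beta_{\rm SF}\mid \Phi_{k_0})$, averaging over the point process gives $\mathbb{E}_{\Phi_{k_0}}[p_{{\rm SF},k}(\beta_{\rm SF})] = \mathbb{P}(\mathrm{SF}_k > \beta_{\rm SF})$, the spatially averaged success probability of the \ac{SF} event. Next I would translate this into an \ac{SIR} event. Because $\mathrm{SF}_k = \mathrm{SIR}_k/(\mathrm{SIR}_k+1)$ is a strictly increasing bijection from $[0,\infty)$ onto $[0,1)$, the event $\{\mathrm{SF}_k > \beta_{\rm SF}\}$ coincides exactly with $\{\mathrm{SIR}_k > \beta\}$ whenever $\beta = \beta_{\rm SF}/(1-\beta_{\rm SF})$, i.e. $\beta_{\rm SF} = \beta/(1+\beta)$, which is precisely the hypothesis. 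Hence $M_{1,k}(\beta/(1+\beta)) = \mathbb{P}(\mathrm{SIR}_k > \beta) = p_{{\rm D},k}(\beta)$, which is the probabilistic content of the claim.

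To recover the stated closed forms I would substitute $b=1$ directly into the two lemmas; the integrand $\big(p/(1+\beta_{\rm SF}^\prime\|\cdot\|^{-\alpha}) + 1-p\big)^b$ loses its exponent and reduces to the linear form appearing in~\eqref{eq:cor_1} and~\eqref{eq:cor_2}. The only genuine algebraic check is that the primed threshold collapses correctly: $\beta_{\rm SF}^\prime = \frac{4\pi\beta_{\rm SF}}{\bar\sigma R^{-2\alpha}(1-\beta_{\rm SF})}$ becomes $\frac{4\pi}{\bar\sigma R^{-2\alpha}}\cdot\frac{\beta_{\rm SF}}{1-\beta_{\rm SF}} = \frac{4\pi\beta}{\bar\sigma R^{-2\alpha}} = \beta^\prime$ after inserting $\beta_{\rm SF}/(1-\beta_{\rm SF}) = \beta$.

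The main obstacle, modest as it is, is bookkeeping rather than analysis. One must confirm that the product-over-$\Phi_{k_0}$ structure, the Laplace-functional factorization into the $L_0$ contribution and the Cox contribution, and the exponents and radial integration limits (notably the power on the bracketed factor and the inner limits in the \ac{BLCP} case) are inherited consistently from the moment lemmas, so that the fading expectation carried out in step~(b) of~\eqref{eq:eq_md2} need not be re-derived and only the reparameterization $\beta_{\rm SF}\mapsto\beta$ is performed.
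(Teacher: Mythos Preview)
Your proposal is correct and follows essentially the same route as the paper, which simply states that the result is obtained by setting $b=1$ in Lemma~\ref{le:lemm_md1} and Lemma~\ref{le:lemm_md2}; you supply the supporting details (tower property, the monotone bijection $\mathrm{SF}=\mathrm{SIR}/(1+\mathrm{SIR})$, and the reduction $\beta_{\rm SF}'\to\beta'$) that the paper omits. Your closing remark about the bracket exponent and the inner integration limits in the \ac{BLCP} expression is well taken: the discrepancy between $n_{\rm B}-1$ versus $n_{\rm B}$ and $\int_0^{R_{\rm B}}$ versus $\int_R^{R_{\rm B}}$ is a typographical inconsistency in the paper's statements rather than a gap in your argument.
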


\subsection{Reconstruction of MD of $p_{{\rm SF},k}(\beta_{\rm SF})$ from $M_{b,k}(\beta_{\rm SF})$}
It is not feasible to compute the \ac{MD}s for most network models. In certain simple cases, the \ac{MD}s can be estimated by considering the closest interferer. 
Another way is by applying the Gil-Pelaez (GP) theorem~\cite{gil1951note} we compute the \ac{MD} using only imaginary moments as shown
    $F_{p_{\rm SF}}(z) = 0.5 + \frac{1}{\pi} \int_0^\infty \frac{\Im\left(e^{-ju \log(z)} M_{ju}\right)}{u} du$.
\begin{figure*}[t]
\centering
\subfloat[]
{\includegraphics[width=0.23\textwidth]{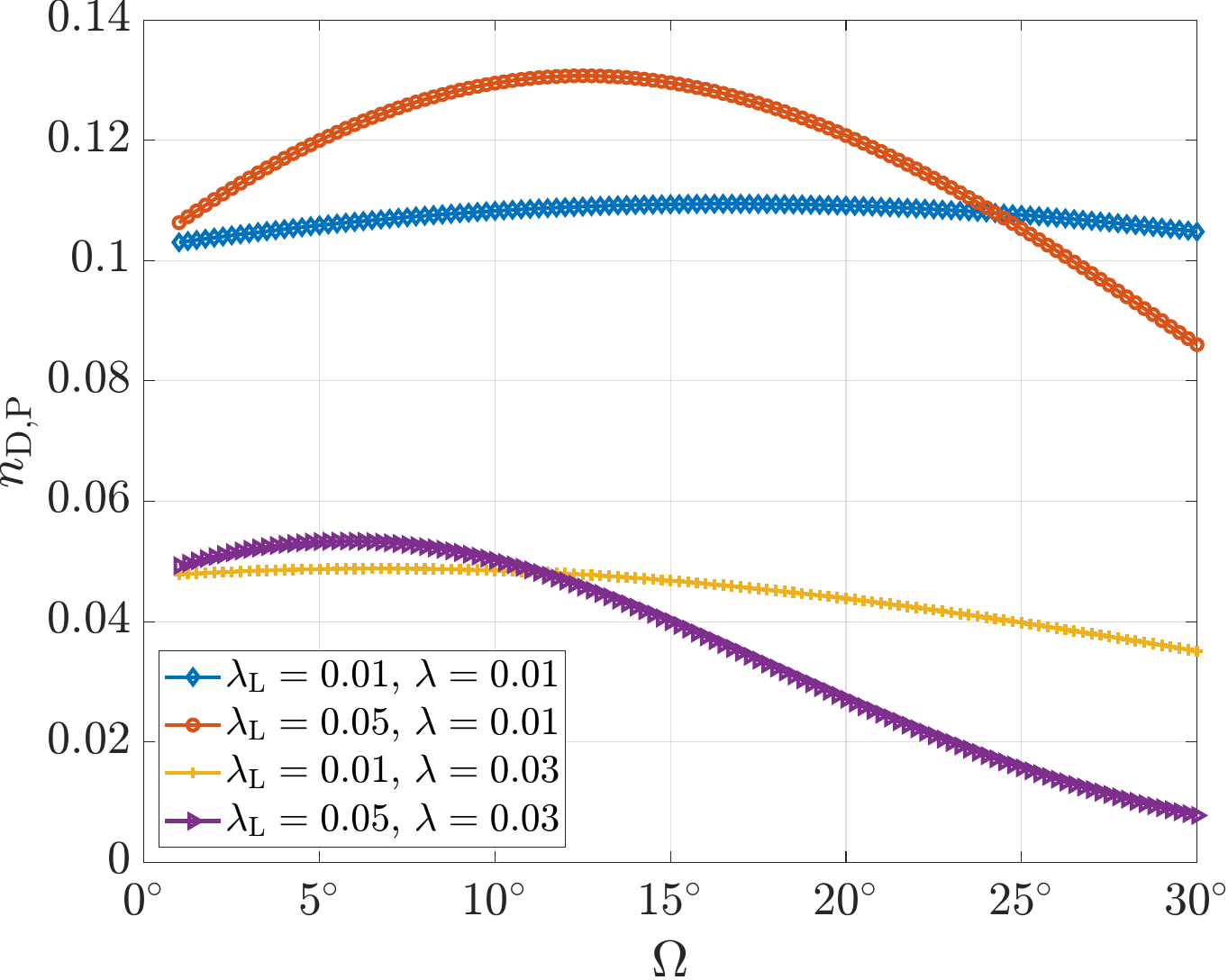}
\label{fig:result_P_1}}
\hfil
\subfloat[]
{\includegraphics[width=0.23\textwidth]{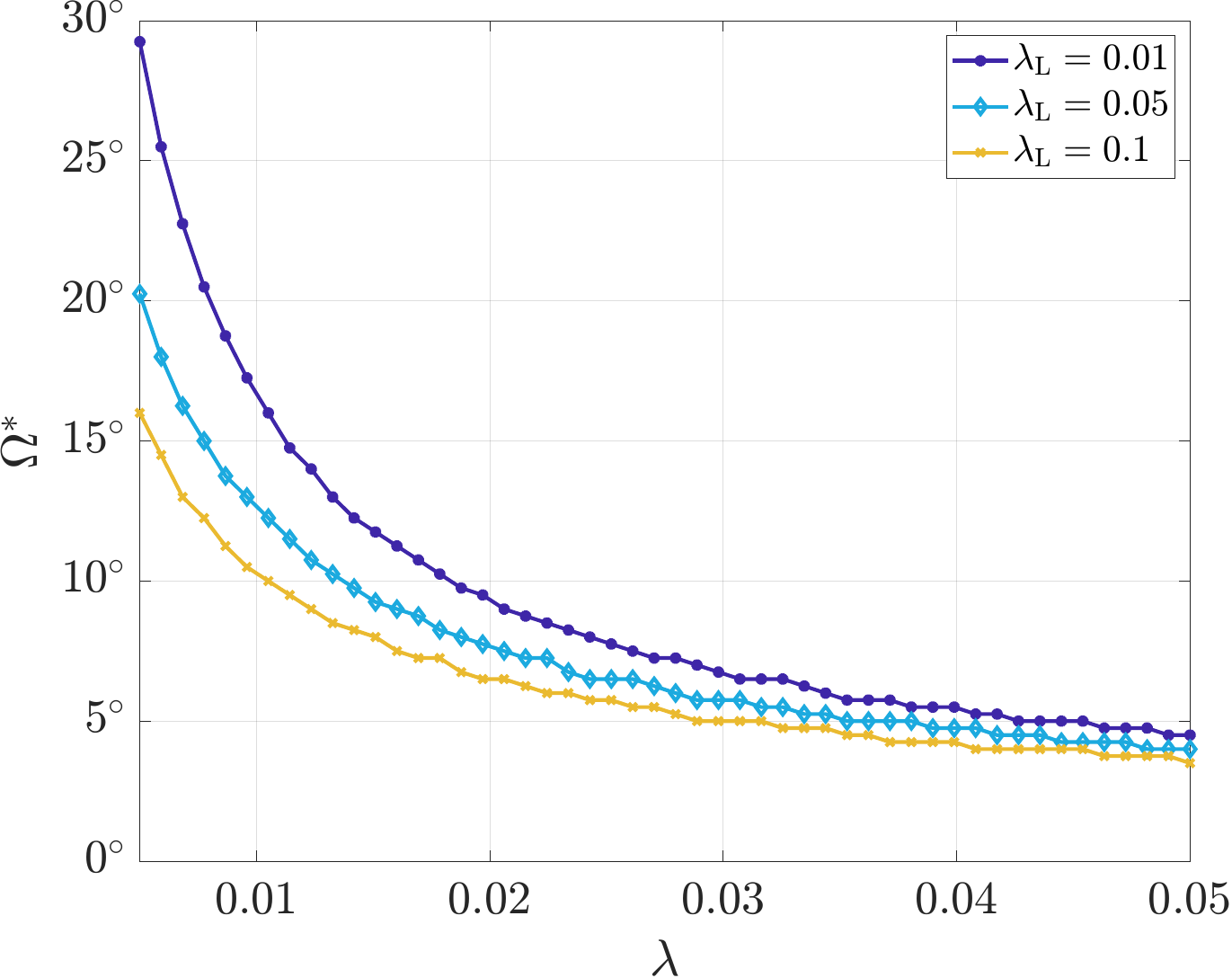}
\label{fig:resultOpt_P_1}}
\hfil
\subfloat[]
{\includegraphics[width=0.23\textwidth]{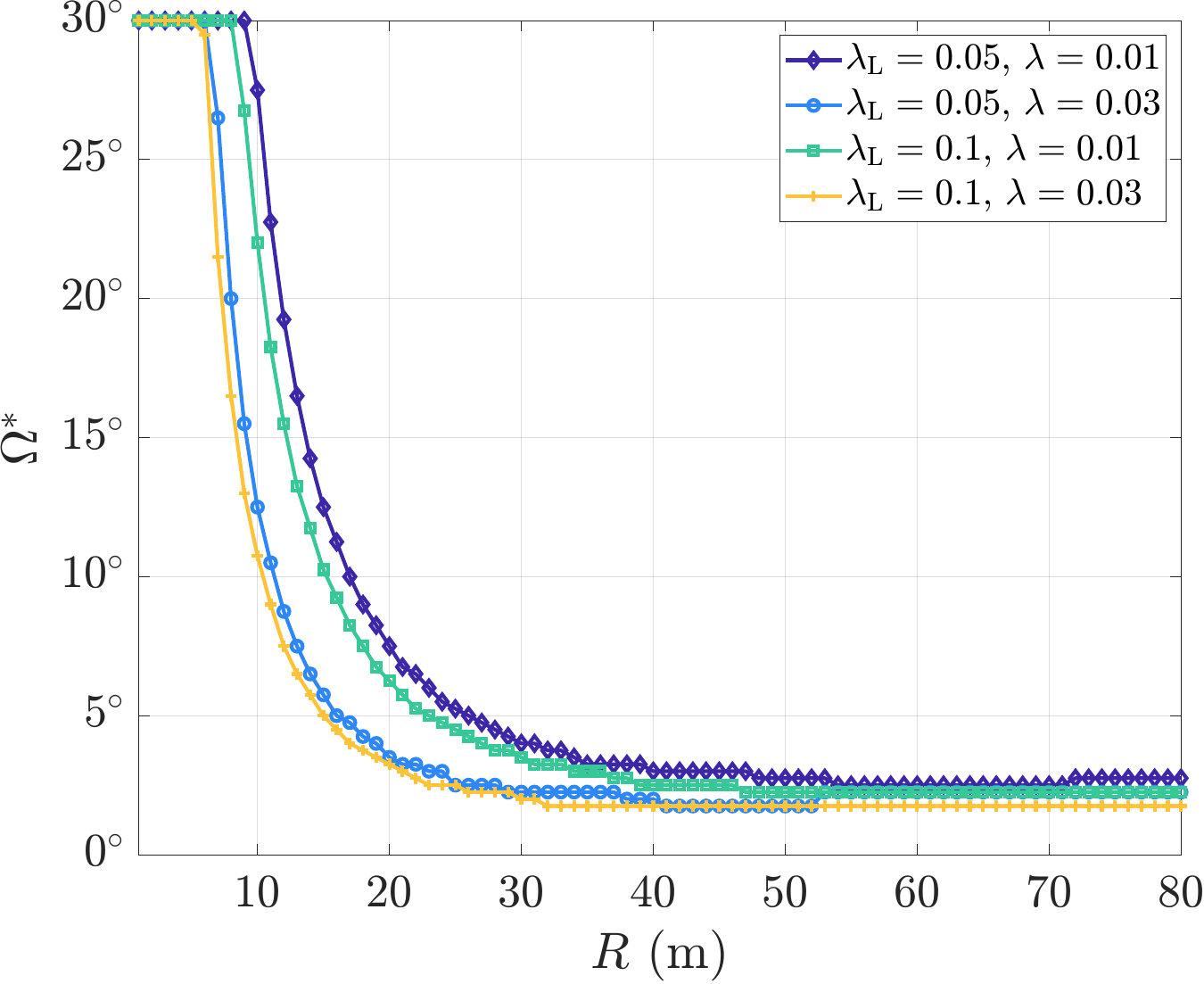}
\label{fig:resultOpt_P_2}}
\hfil
\subfloat[]
{\includegraphics[width=0.23\textwidth]{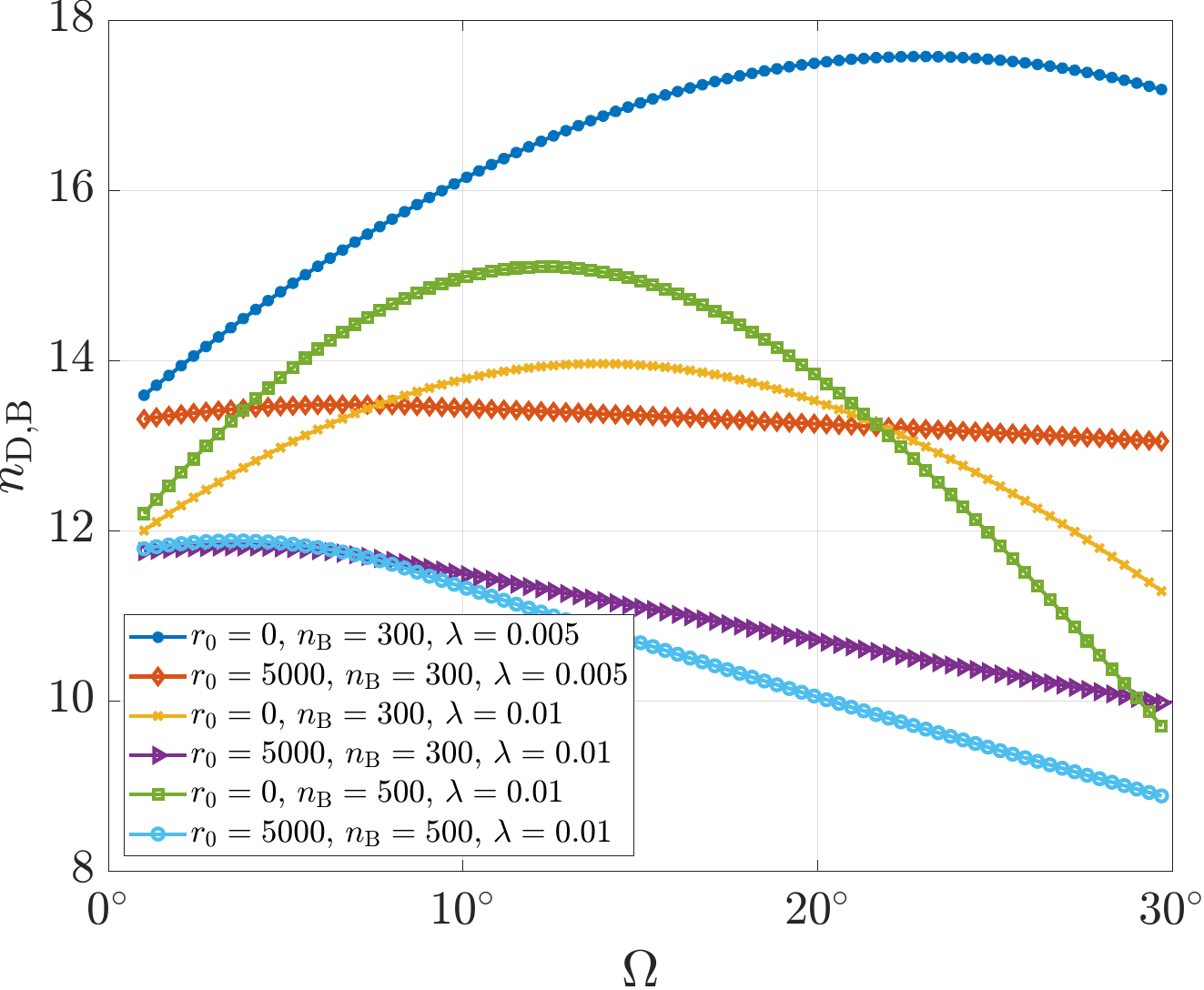}
\label{fig:result_B_1}}
\caption{(a) and (d) Number of successful detections versus $\Omega$ for \ac{PLCP} and \ac{BLCP} framework respectively, (b) and (c) Optimal beamwidth versus $\lambda$ and $R$ respectively for \ac{PLCP}.}
\label{fig:result1} 
\end{figure*}
The analytical evaluation of the integral is intractable in the case of \ac{SF} or \ac{SIR} \ac{MD}, and the numerical approximation computation of imaginary moments is computationally challenging. Instead, we use the Chebyshev-Markov (CM) method recently proposed by Wang \textit{et.al} in~\cite{wang2023fast}. Here, the authors reconstruct MD using a finite sequence of moments based on the Hausdorff moment problem (HMP). 
Hausdorff demonstrated that the existence of a distribution is contingent upon the infinite sequence of moments being completely monotonic. Furthermore, if the distribution does exist, it is unique. In practice, we have a finite sequence of moments, leading to a truncated HMP. Thus for the values of $M_{b}(\beta_{\rm SF})$ for $b$--th moments, we determine $F_{p_{\rm SF}}$ such that
\begin{align}
    \int_{0}^1 z^b dF_{p_{\rm SF}} (z) = M_{b}(\beta_{\rm SF})
    \label{eq:eq_sol}
\end{align}
for $b = \{1, \dots, n\}$ and $n \in \mathbb{N}$. If $\mathcal{F}_n$ is the set of all possible $F$ which solve~\eqref{eq:eq_sol} then the infimum and supremum of $F$ at the point of interest $x_0$, is given as $\inf_{F_{p_{\rm SF}} \in \mathcal{F}_n} F_{p_{\rm SF}}(x_0)$, and $\sup_{F_{p_{\rm SF}} \in \mathcal{F}_n} F_{p_{\rm SF}}(x_0)$.
The CM method consists of taking the average of infima and suprema to approximate the MD of $p_{\rm SF} (\beta_{\rm SF})$. If we take $n$ moments, the reconstructed MD, is 
$F_{p_{\rm SF}} (x, t_{\rm SF}) = \left( \inf_{F_{p_{\rm SF}} \in \mathcal{F}_n} F_{p_{\rm SF}}(x, t_{\rm SF}) + \sup_{F_{p_{\rm SF}} \in \mathcal{F}_n} F_{p_{\rm SF}}(x, t_{\rm SF})\right)$. 

Building upon prior work by Markov, their approach tackles the problem of finding the infimum and supremum of a distribution function $F_{p_{\rm SF}}$ within a specific range $[0, 1]$. The key idea is constructing a special discrete distribution $F_{p_{\rm SF}}^*$ with two important properties:
(1) Prescribed Moments: $F_{p_{\rm SF}}^*$ is designed to match the given finite sequence of moments $M_{b}(\beta_{\rm SF})$, and (2) Concentrated Mass: The maximum possible probability mass $p_0$ is concentrated at a specific point of interest $x_0$ within the range. This concentrated mass at $x_0$ essentially creates a \textit{worst-case scenario} distribution in terms of the range of $F_{p_{\rm SF}} (x_0)$. Mathematically, the infimum of $F_{p_{\rm SF}} (x_0)$ across all allowable distributions $\mathcal{F}_n$ is equal to $F_{p_{\rm SF}}^* (x_0^{-1})$. Conversely, the supremum is equal to $F_{p_{\rm SF}}^* (x_0)$. In our case, we already have the moments of $p_{\rm SF} (\beta_{\rm SF})$ given in Lemma~\ref{le:lemm_md1} and~\ref{le:lemm_md2}. Further, by leveraging the concept of jumps in a discrete distribution, where the jump locations $x_i$ represent possible values and the jump heights $p_i$ represent the corresponding probabilities. By carefully choosing these jumps and their heights, the constructed distribution $F_{p_{\rm SF}}^*$ achieves the desired properties of matching moments and concentrating mass at the point of interest.
The CM method provides a practical approach to approximate the \ac{MD} by constructing a discrete distribution with prescribed moments.
This approximation aligns with theoretical \ac{MD} characteristics, allowing for efficient reconstruction of complex distributions while adhering to computational feasibility.

\section{Numerical Results and Discussion}
\label{sec:Results}
This section highlights the results from a typical automotive radar \ac{PLCP} and \ac{BLCP} framework. The radar parameters used for the numerical results are based on typical automotive radars obtained from~\cite{series2014systems}. Specifically, $P = 10$ dBm, $\bar{\sigma} = 30$ dBsm, $\alpha = 2$, $G_{\rm t} = G_{\rm r} = 10$ dBi, $f_c = 76.5$ GHz, $\beta = 1$ dB (\ac{SIR} threshold), and $\beta_{\rm SF} = 0.5$ (\ac{SF} threshold). In all the plots, we assume $R=15\, {\rm m}$, $\lambda = 0.01\, {\rm m}^{-1}$, $R_{\rm g} = 1500\, {\rm m}$, $n_{\rm B} = 300$, and $R_k = 500\, {\rm m}$ unless specified otherwise.

\subsection{Optimal Parameters}
We denote the average number of successful detections, $n_{{\rm D},k}$, as the product of the probability of successful detection and the average number of potential targets present within the bounded radar beam of the ranging distance $R$, i.e.,
    $n_{{\rm D},k} \geq n_k (R) \times p_{{\rm D},k} = \lambda l_k  p_{{\rm D},k}.$
It is a measure for approximating the minimum number of successful detections, thus serving as a significant instrument for optimizing radar system design. Figure~\ref{fig:result_P_1} shows the plot of $n_{{\rm D},{\rm P}}$ with respect to beamwidth ($\Omega$).
From Remark~\ref{rem:remark_1}, $n_{\rm P} (R)$ has a linear relation with the intensity of the lines/streets and the intensity of vehicles on the road. Due to the symmetric structure of the radar sector and the homogeneous nature of \ac{PLCP}, the same behavior of linearity follows with beamwidth, i.e., $n_{\rm P} (R)$ increases linearly as $\Omega$ increases, while the detection probability decreases with increasing beamwidth~\cite{shah2024modeling}. As we take the product of $n_{\rm P} (R)$ and $p_{{\rm D},{\rm P}}$, $n_{{\rm D},{\rm P}}$ first increases as $n_{\rm P} (R)$ dominates $p_{{\rm D},{\rm P}}$, then after a certain critical value of $\Omega$, $p_{{\rm D},{\rm P}}$ dominates $n_{\rm P} (R)$, and $n_{{\rm D},{\rm P}}$ then decreases. As the intensity of lines and the beamwidth increase, the average length of line segments within the radar sector also increases. However, the increase in $\lambda_{\rm L}$ and beamwidth results in a reduced probability of successful detection due to increased interference. Consequently, a trade-off exists between $p_{{\rm D},{\rm P}}$ and the average length 
and an optimal beamwidth value, $\Omega^\ast$, exists, for which $n_{{\rm D},{\rm P}}$   is maximized  and likewise for $n_{{\rm D},{\rm B}}$, $\Omega^\ast= \arg\max n_{{\rm D},k}$. Fig.\ref{fig:result_B_1} shows the plot of $n_{{\rm D},{\rm B}}$ with respect to beamwidth and similar trade-off exists between $p_{{\rm D},{\rm B}}$ and $n(R)$. Unlike \ac{PLCP}, we find an optimal beamwidth, for which $n_{{\rm D},{\rm B}}$ is maximized as a function of the location of ego radar.

In Fig.~\ref{fig:resultOpt_P_1} we plot the optimal beamwidth $\Omega^\ast$ w.r.t $\lambda$ for 3 different values of $\lambda_{\rm L}$. The figure illustrates that the optimal beamwidth decreases with an increase in $\lambda$, eventually saturating. For a given value of $\lambda$, optimal beamwidth increases for increasing values of $\lambda_{\rm L}$.
Likewise, we find the optimal beamwidth as a function of distance to target $R$. 

Figure.~\ref{fig:resultOpt_P_2} illustrates the optimal beamwidth with respect to $R$ for various values of $\lambda_{\rm L}$ and $\lambda$. We see that for smaller values of $R$, irrespective of the intensity of streets and radars, the optimal beamwidth remains constant at $30^\circ$. For the smaller value of $\lambda = 0.01\,{\rm m}^{-1}$, optimal beamwidth remains constant for a larger span of $R$ as compared to the higher value of $\lambda$. 
Our analysis shows that the number of successful detections keeps increasing for smaller values of range $R$ with the increasing beamwidth. This is because with very few interferers present and a small $R$, even if we increase beamwidth, the effect of the interferers is not sufficiently significant to limit $n_{{\rm D},{\rm P}}$. Similarly when $\lambda_{\rm L} = 0.05 \,{\rm m}^{-2}$ and $\lambda = 0.01 \,{\rm m}^{-1}$, the optimal beamwidth is constant at $30^\circ$ for a larger range of $R$, as compared to $\lambda_{\rm L} = 0.1 \,{\rm m}^{-2}$ and $\lambda = 0.03 \,{\rm m}^{-1}$. Eventually, optimal beamwidth plateaus to some value. Further, we see that for higher intensity values of interfering radars, the optimal beamwidth plateaus are quicker compared to lower values. When $R$ is higher, the area of the radar sector is greater, including more interferers. Therefore, with the increase in $R$, the effect of interfering signal power received for higher $\lambda_{\rm L}$ and $\lambda$ causes the optimal beam to saturate to lower values quicker as compared to the scenario when $\lambda_{\rm L}$ and $\lambda$ are smaller.  
\begin{figure}
\centering
\subfloat[]
{\includegraphics[width=0.23\textwidth]{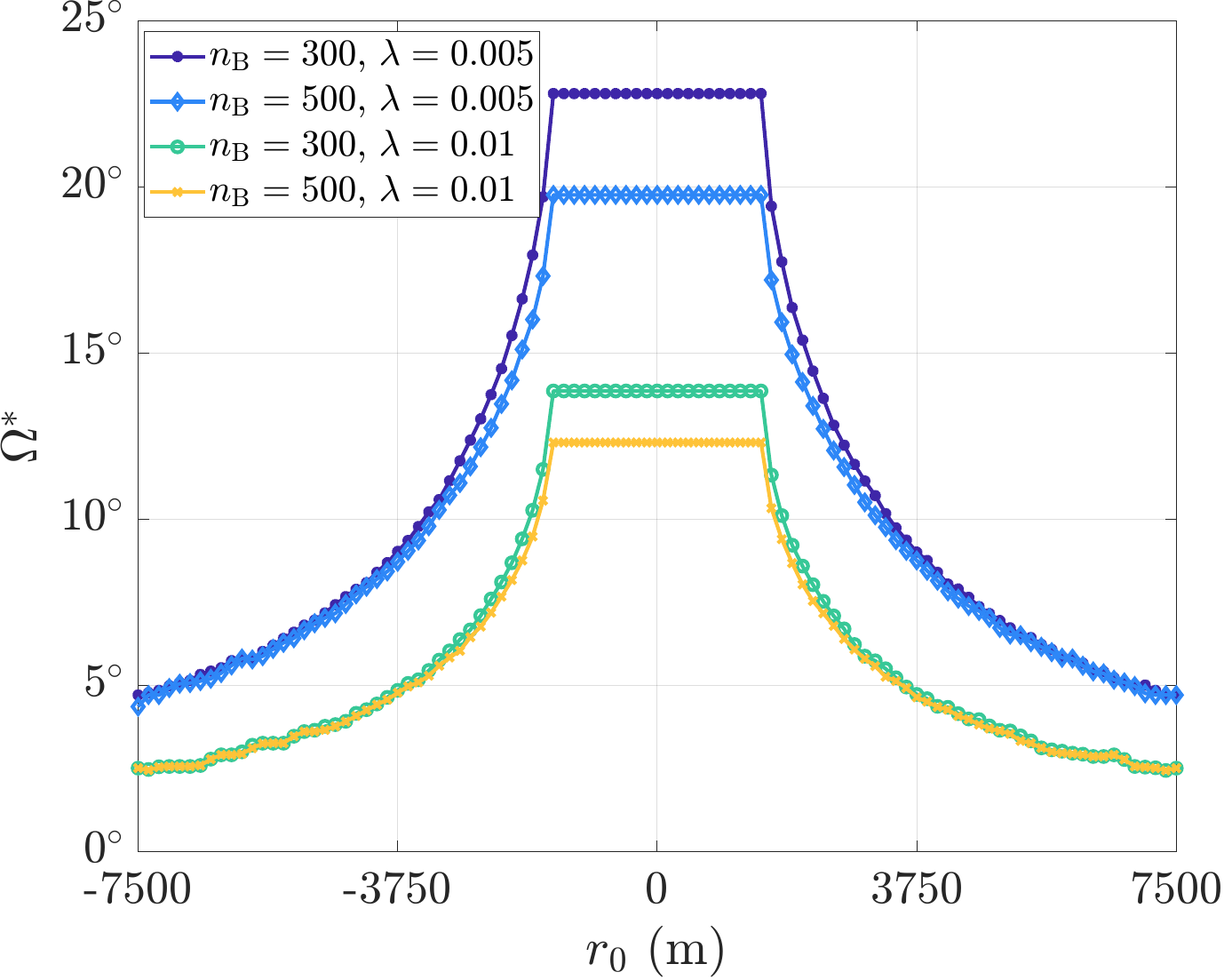}
\label{fig:resultOpt_B_1}}
\hfil
\subfloat[]
{\includegraphics[width=0.23\textwidth]{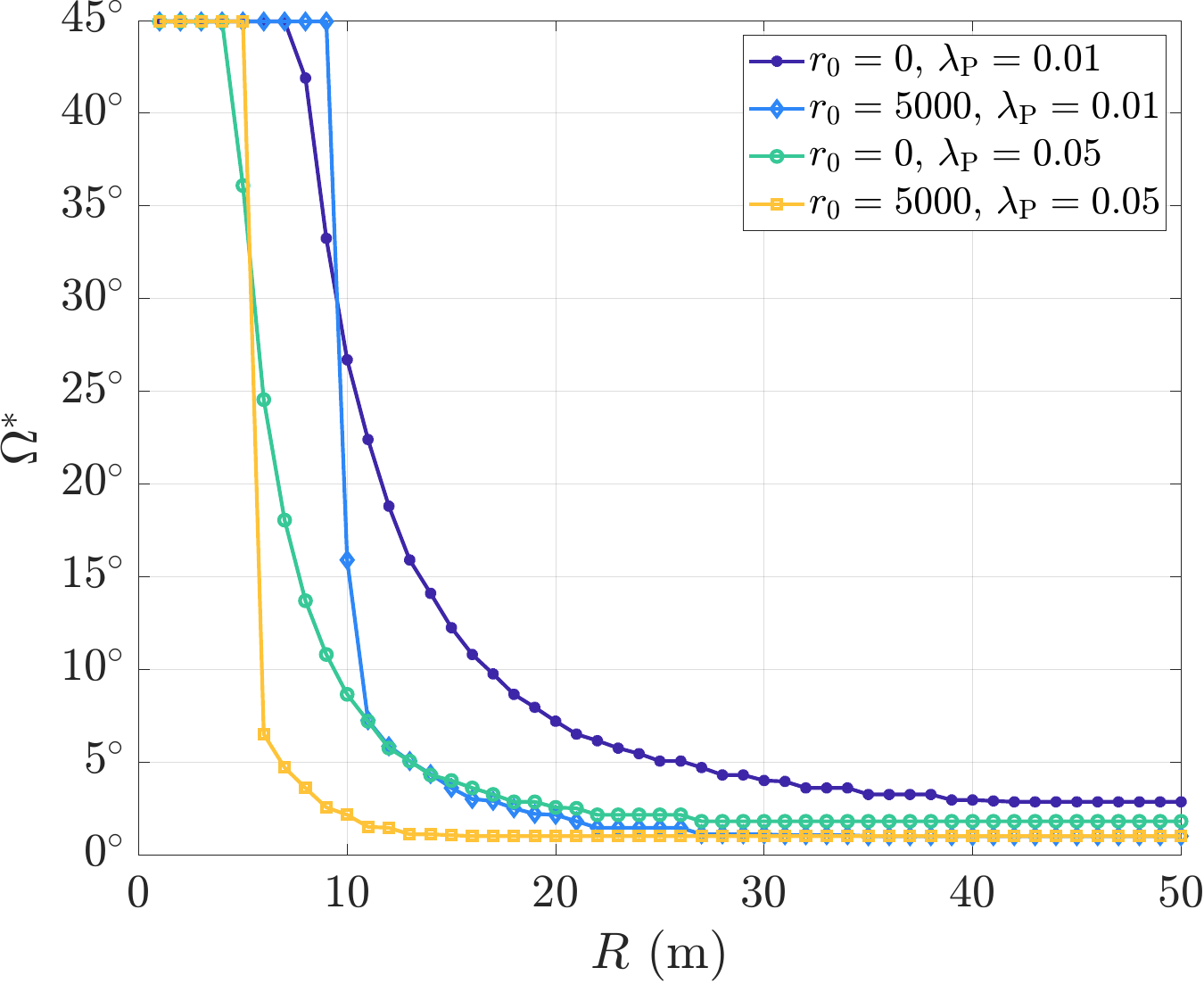}
\label{fig:resultOpt_B_2}}
\caption{(a) and (b) Optimal beamwidth versus $r_0$ and $R$ respectively for \ac{BLCP}.}
\end{figure}

Figures.~\ref{fig:result1}(e) and (f) illustrate the optimal beamwidth as a function of $r_0$, and $R$ respectively for parameters $n_{\rm B} = 300, 500$ and $\lambda = 0.005, 0.01\,{\rm m}^{-1}$. From a cognitive radar perspective, the optimum beamwidth for vehicles should be selected as the ego vehicle moves from one side of the city to the other side. 
Fig.~\ref{fig:resultOpt_B_1} illustrates that the optimal beamwidth first increases then saturates as the vehicle moves into the city, and then decreases as the vehicle exits the city center. 
Due to fewer interferers at the outskirts, we would anticipate that the ego radar can increase its beamwidth, here, to detect a larger number of targets. However, our analysis shows that we do not observe this in Fig.~\ref{fig:resultOpt_B_1}.
Instead, at the city's outskirts, having $\Omega > \Omega^{\ast}$ does not significantly increase the potential number of targets relative to the detection probability; thus, the optimal number of target detections occurs at smaller values of $\Omega$. The main insight we obtain from this result is that \emph{outside of the city center, the ego radar should prioritize monitoring vehicles traveling within its own lane rather than focusing excessively on those approaching from intersections.} For smaller values of $\lambda = 0.005\,{\rm m}^{-1}$, the ego radar will encounter fewer interferers; thus, the optimal beamwidth has values larger as compared to $\lambda = 0.01\,{\rm m}^{-1}$. We also see that at the outskirts of the city for a given $\lambda$, optimal beamwidth plateaus for $n_{\rm B} = 300$ and $500$. This is because the impact of interference from the intersecting streets decreases, and only the interference from the ego radars street affects $n_{\rm D}$.

Likewise, we find the optimal beamwidth w.r.t the distance to target $R$, as illustrated in Fig.~\ref{fig:resultOpt_B_2}. When the ego radar operates outside the city center and at initial values of $R < 10\, {\rm m}$, the optimal beamwidth remains fixed at $\Omega = 45^\circ$, indicating that the ego radar can maintain the largest possible beamwidth. As $R$ increases beyond $10\,{\rm m}$ with $\lambda = 0.01$, the ego radar should dynamically reduce its beamwidth from $45^\circ$ to $15^\circ$. The optimal beamwidth in the case of $r_0 = 5000$ plateaus quickly while for $r_0 = 0$, the ego radar has to adjust the beamwidth by increasing the amount of $R$ and plateauing slowly. As shown in Fig.~\ref{fig:resultOpt_B_2}, the saturated beamwidth for a fixed $\lambda$ is higher when $r_0 = 0$ than when $r_0 = 5000$, reflecting that at the outskirts, most potential targets are concentrated along $L_0$, whereas at $r_0 = 0$, additional targets appear due to interfering lines. Consequently, at $r_0 = 5000$, the ego radar selects a narrower beamwidth at larger values of $R$ to maintain a high detection probability.

\begin{figure}[t]
\centering
\subfloat[]
{\includegraphics[width=0.23\textwidth]{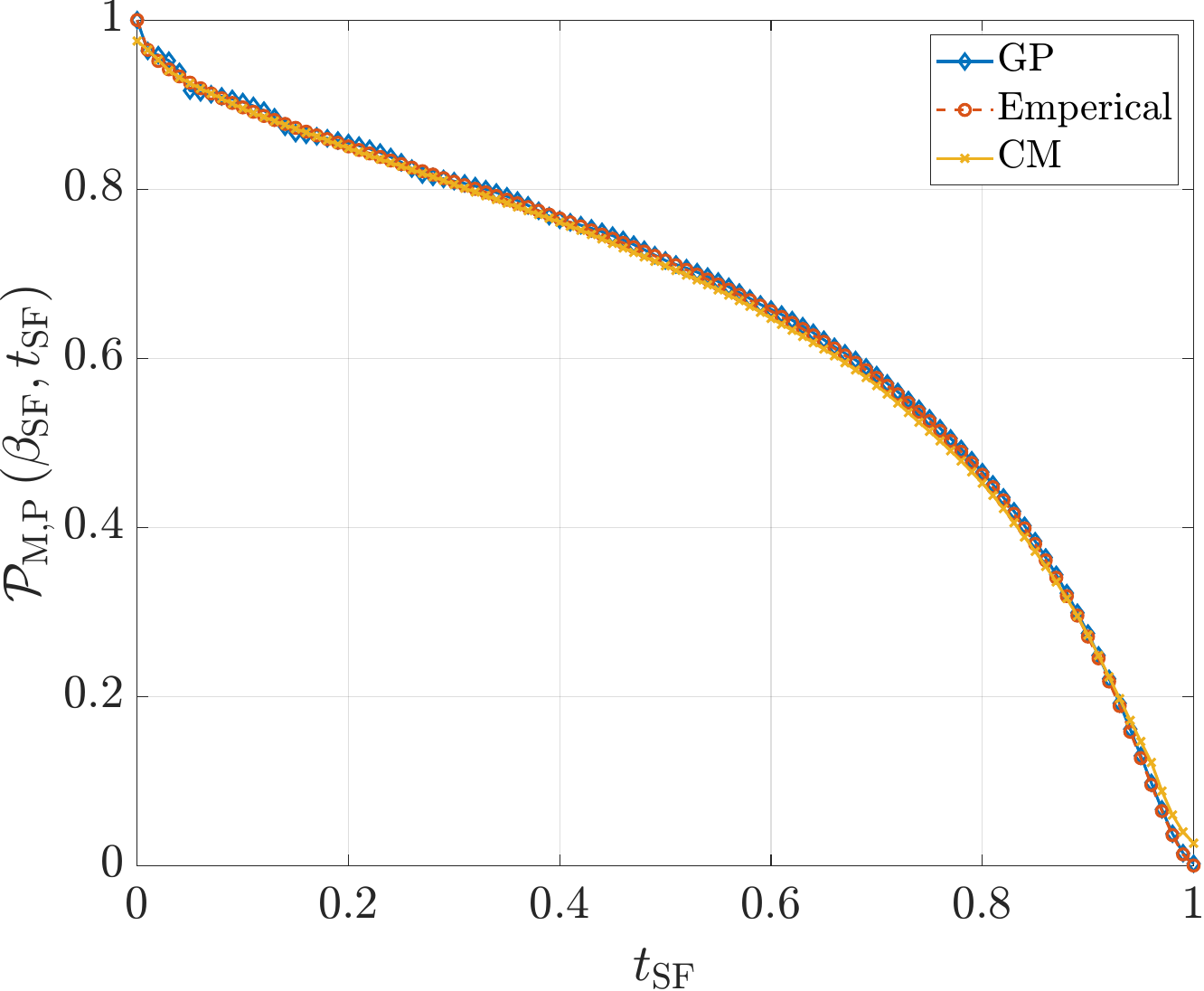}
\label{fig:resultMeta1}}
\hfil
\subfloat[]
{\includegraphics[width=0.24\textwidth]{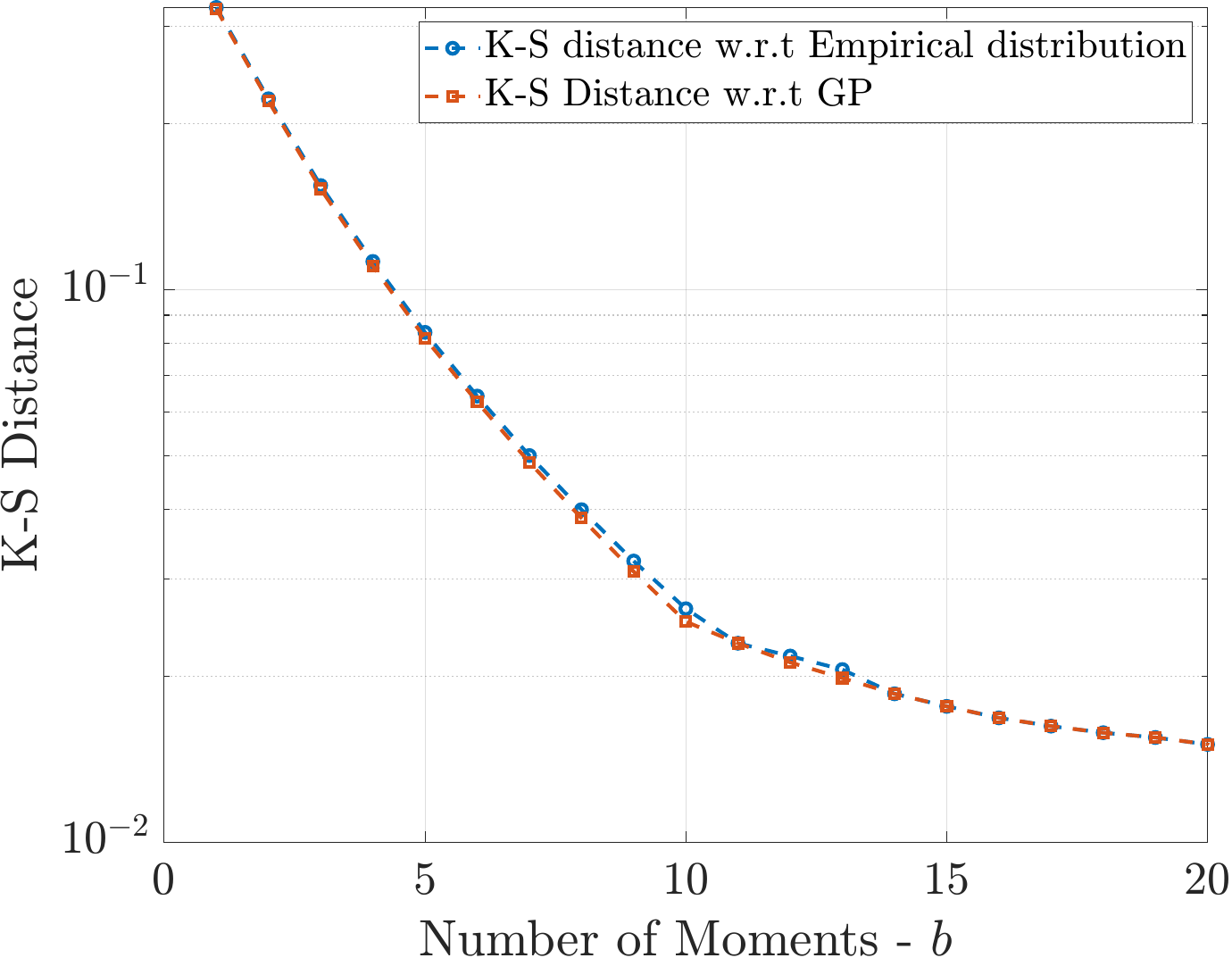}
\label{fig:resultMeta2}}
\caption{(a) \ac{SF} \ac{MD} of an ego radar in \ac{PLCP} framework generated through empirical, GP, and CM-bound methods, and (b) K-S distance between the CM-bound and empirical distribution, and the GP method, versus the number of moments}
\label{fig:resultMeta} 
\end{figure}

\begin{figure*}[t]
\centering
\subfloat[]
{\includegraphics[width=0.3\textwidth]{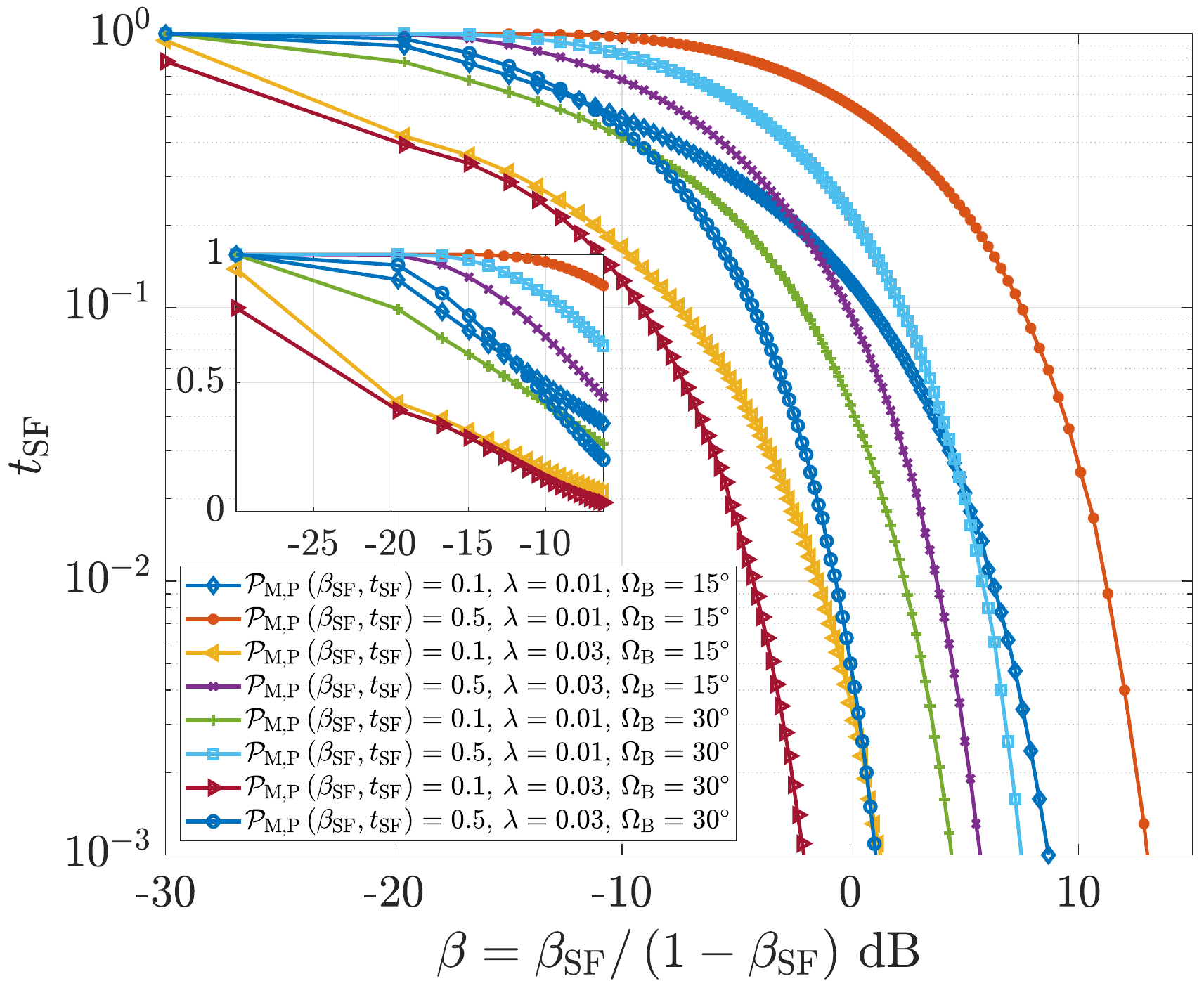}
\label{fig:P_md_res1}}
\hfil
\subfloat[]
{\includegraphics[width=0.3\textwidth]{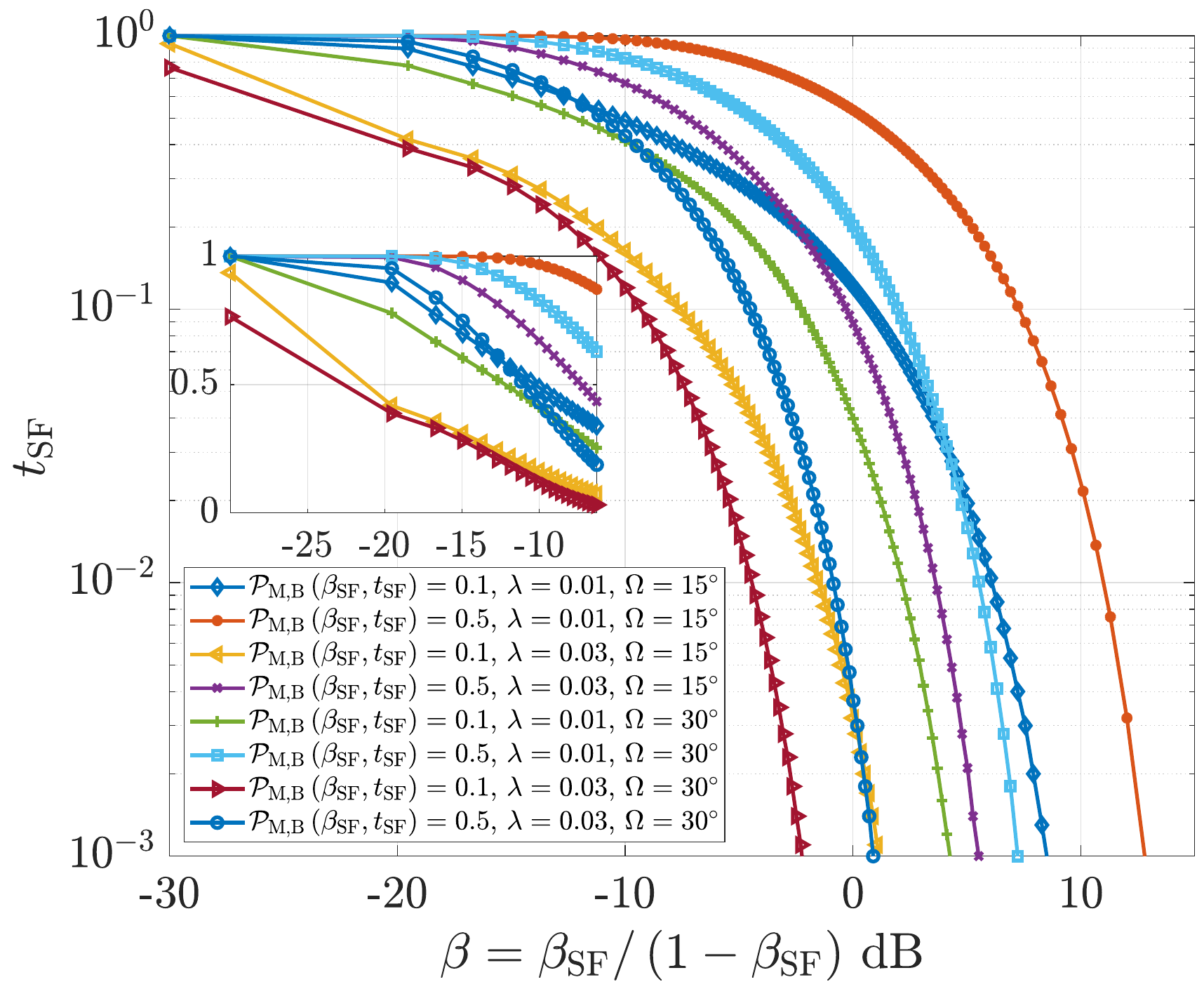}
\label{fig:B_md_res1}}
\hfil
\subfloat[]
{\includegraphics[width=0.3\textwidth]{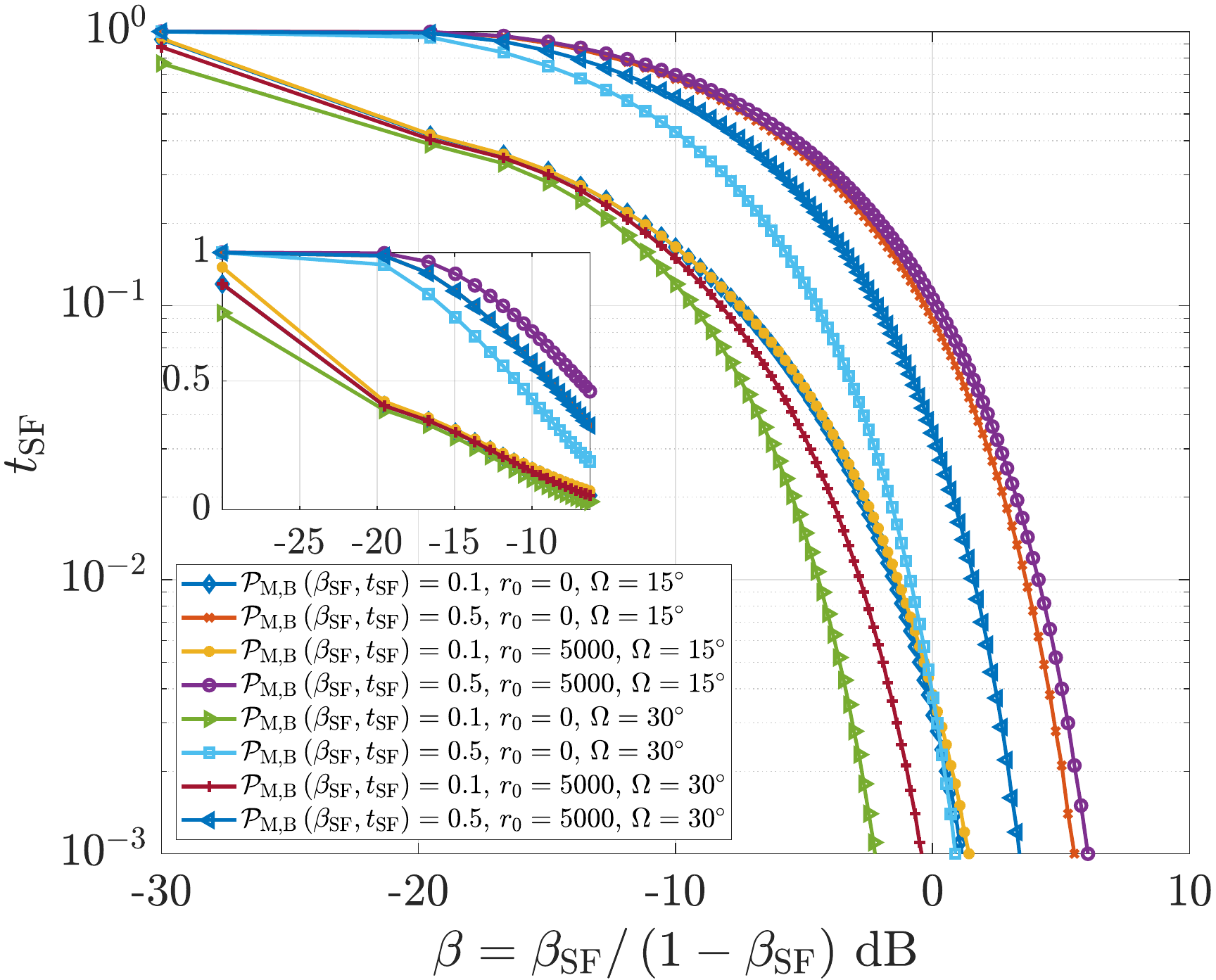}
\label{fig:B_md_res2}}
\caption{Plot between $t_{\rm SF}$ and $\beta_{\rm SF}$ where $\mathcal{P}_{{\rm M},k} \left(\beta_{\rm SF}, t_{\rm SF}\right) = \{0.1, 0.5\}$ for the CM method at $n = 21$ for (a) \ac{PLCP} framework, (b) \ac{BLCP} framework with $r_0 = 0$, and (c) for \ac{BLCP} framework for different values of $r_0$.}
\label{fig:resultMetaA} 
\end{figure*}

\subsection{Results on \ac{MD}}
\subsubsection{Reconstruction of \ac{MD}}
In section~\ref{sec:meta}, we noted that the \ac{MD} are infeasible to solve, and we rely on the novel CM bound method. To determine the fitness of MD reconstructed through the CM bound method, we compute the Kolmogorov–Smirnov (K-S) distance which is 
the largest absolute difference between the CDFs of the two samples, offering a measurable indication of the similarity between the distributions, with values ranging between 0 and 1. The bounded nature of this metric provides an advantage over other metrics, such as the Wasserstein distance and the Kullback-Leibler (KL) divergence, which can have a wide range of values and are not as easily interpretable. Compared to KL divergence, which measures the discrepancy between two probability distributions in terms of information loss, the KS distance has the benefit of being both symmetrical and non-parametric.

In Fig~\ref{fig:resultMeta1}, we plot the \ac{MD} of \ac{SF} in a \ac{PLCP} framework for values of $\beta_{\rm SF} = 10/(10+1)$, $\lambda_{\rm L} = \lambda = 0.01$, and $\Omega = 15^\circ$, and benchmark the reconstructed \ac{MD} using the CM bound method with the GP method, and the empirical distribution. From Fig.~\ref{fig:resultMeta1}, we see that for ten moments of $p_{\rm SF} \left(\beta_{\rm SF}\right)$, the CM bound methods achieve a construction error of $0.01$ with respect to the GP method and $0.005$ with respect to the empirical distribution, demonstrating a high level of accuracy. As seen in Fig.~\ref{fig:resultMeta1}, the reconstructed MD through GP method and empirical distribution are approximately equal to each other, thus we can any use any one the method to determine the fitness of MD generated through CM bound. Therefore, in Fig.~\ref{fig:resultMeta2}, we plot the K-S distance between the MD of CM bound w.r.t to both the GP and empirical methods. We take $b = \{1,\dots,21\}$ and calculate moments using $M_b \left(\beta_{\rm SF}\right)$, and then employ these moments in reconstructing MD. The resulting K-S distance decreases as $b$ increases, as shown in the plot. Any value of $b \geq 10$ offers a good match (i.e., the total distance is in $(0.005, 0.01]$). The only trade-off is that the computation time increases as we use more and more moments to reconstruct the MD. Thus, we conclude that the CM-bound method helps reconstruct MD with high accuracy. In all of the further results, we use the CM bound method to reconstruct the MD by taking 21 moments and using the reconstructed MD to derive results and insights.

\subsubsection{Inference from reconstructed MD - $\mathcal{P}_{{\rm M},k} \left(\beta_{\rm SF}, t_{\rm SF}\right)$} 
We illustrate how the reconstruction of the MD can improve network efficiency and eventually mitigate the effect of interference. In the context of an MD reliability analysis~\eqref{eq:eq_md}, $\mathcal{P}_{{\rm M},k} \left(\beta_{\rm SF}, t_{\rm SF}\right) = y$ represents the probability that a fraction of $1 - y$ of users achieve a reliability level of $t_{\rm SF}$ at the threshold SF value of $\beta_{\rm SF}$. By selecting the value of $\beta_{\rm SF}$, we determine the reliability $t_{\rm SF}$ corresponding to a specific SF value $\beta_{\rm SF}$, which represents the percentage of users who fail to achieve it. We identify the pairs $(\beta_{\rm SF}, t_{\rm SF})$ for $\mathcal{P}_{{\rm M},k} \left(\beta_{\rm SF}, t_{\rm SF}\right) = 0.1$ and $0.5$. This refers to the reliability of detecting a target and the SF threshold value at the $10$th percentile and $50$th percentile, indicating that $90\%$ and $50\%$ of users achieve or surpass this level of reliability, respectively. In our analysis, we assume $p = 1$, i.e., all radars are transmitting, and we vary $\beta_{\rm SF}$ from $0.001$ to $0.99$, which corresponds to $\beta$ in the range of $[-29.99, 18.47]$ dB. We first discuss the \ac{MD} results for \ac{PLCP}, followed by \ac{BLCP}.

{\bf PLCP:} For PLCP, we see that from Fig.~\ref{fig:P_md_res1}, at $\beta = 0$ dB, $\lambda = 0.01$, and $\Omega = 15^\circ$, $90\%$ of users can achieve detection probability with reliability level close to $0.125$. In contrast, $50\%$ of users can have reliability of $t_{\rm SF} = 0.547$ in detecting a target. As the beam width increases from $15^\circ$ to $30^\circ$, the reliability level for detecting a target decreases for all the users. Note that Fig.~\ref{fig:result_P_1} only illustrates an average view of the network, i.e., the number of successful detections averaged across all automotive radars, while  Fig.~\ref{fig:P_md_res1} offers insights about the deep network performance. For $\beta = -10$ dB and $\lambda = 0.03$, as $\Omega$ increases from $15^\circ$ to $30^\circ$, the reliability decreases from $t_{\rm SF} = 0.679$ to $0.447$ in case $\mathcal{P}_{{\rm M},{\rm P}} \left(\beta_{\rm SF}, t_{\rm SF}\right) = 0.5$, indicating that by increasing the beamwidth the reliability level for $50\%$ of users decreases at a faster rate.

Another interesting insight that can be obtained from this analysis in Fig.~\ref{fig:P_md_res1}, is that $50\%$ of users achieve the detection of a target with a reliability level of $0.97$ at $\lambda = 0.01$, $\Omega = 15^\circ$, and $\beta = -10$ dB. When we increase the intensity of vehicles to $0.03$, the reliability level falls to $t_{\rm SF} = 0.679$; if we increase the beam width to $30^\circ$, the reliability,$t_{\rm SF} = 0.84$. Thus, we get a higher reliability when the beamwidth of vehicles is increased as compared to if only the number of vehicles on streets increases. \emph{Thus, if we want to have half of the vehicles perform reasonably well in detecting several targets, increasing the beamwidth is beneficial for smaller $\lambda$, while in dense urban areas with high vehicular density, we will have a better performance if we keep beamwidth small.}

{\bf BLCP}: Next, we plot the \ac{MD} results for the \ac{BLCP} framework in Figs.~\ref{fig:resultMetaA}(b) and (c). In Fig.~\ref{fig:B_md_res1}, we plot the $t_{\rm SF}$ w.r.t $\beta$ for different values of $\mathcal{P}_{{\rm M},{\rm B}} \left(\beta_{\rm SF}, t_{\rm SF}\right)$, $\lambda$, and $\Omega$ when the ego radar is at the origin, i.e., $r_0 = 0$. 
Contrary to the PLCP, Fig.~\ref{fig:B_md_res2} shows how the reliability level changes as the location of the ego radar ($r_0$) varies across the city. For $\mathcal{P}_{{\rm M},{\rm B}} \left(\beta_{\rm SF}, t_{\rm SF}\right) = 0.1, 0.5$, and $\Omega = 15^\circ$, the reliability threshold is same irrespective of $r_0$. When the beamwidth is $15^\circ$, the interferers on the intersecting lines have a less prominent effect than $L_0$. Thus, the reliability threshold at different city locations is nearly constant as we have assumed a uniform  $\lambda$ across all lines. On the contrary, when $\Omega = 30^\circ$, at $\beta = -10$ dB, $90\%$ of users in the city center achieve detection probability with a reliability level close to $0.119$, while at the outskirts, i.e., $r_0 = 5000$ it is $0.148$. For $r_0 = 5000$, and $\Omega = 30^\circ$, $90\%$ of users have reliability level close to $0.4307$ which increases to $0.5766$ for $50\%$ of users. Thus, the intensity of vehicles, $\lambda$, and $\Omega$ are two parameters that can be optimized so that per radar detection performance can be optimized. 


\begin{figure*}[t]
\centering
\subfloat[]
{\includegraphics[width=0.23\textwidth]{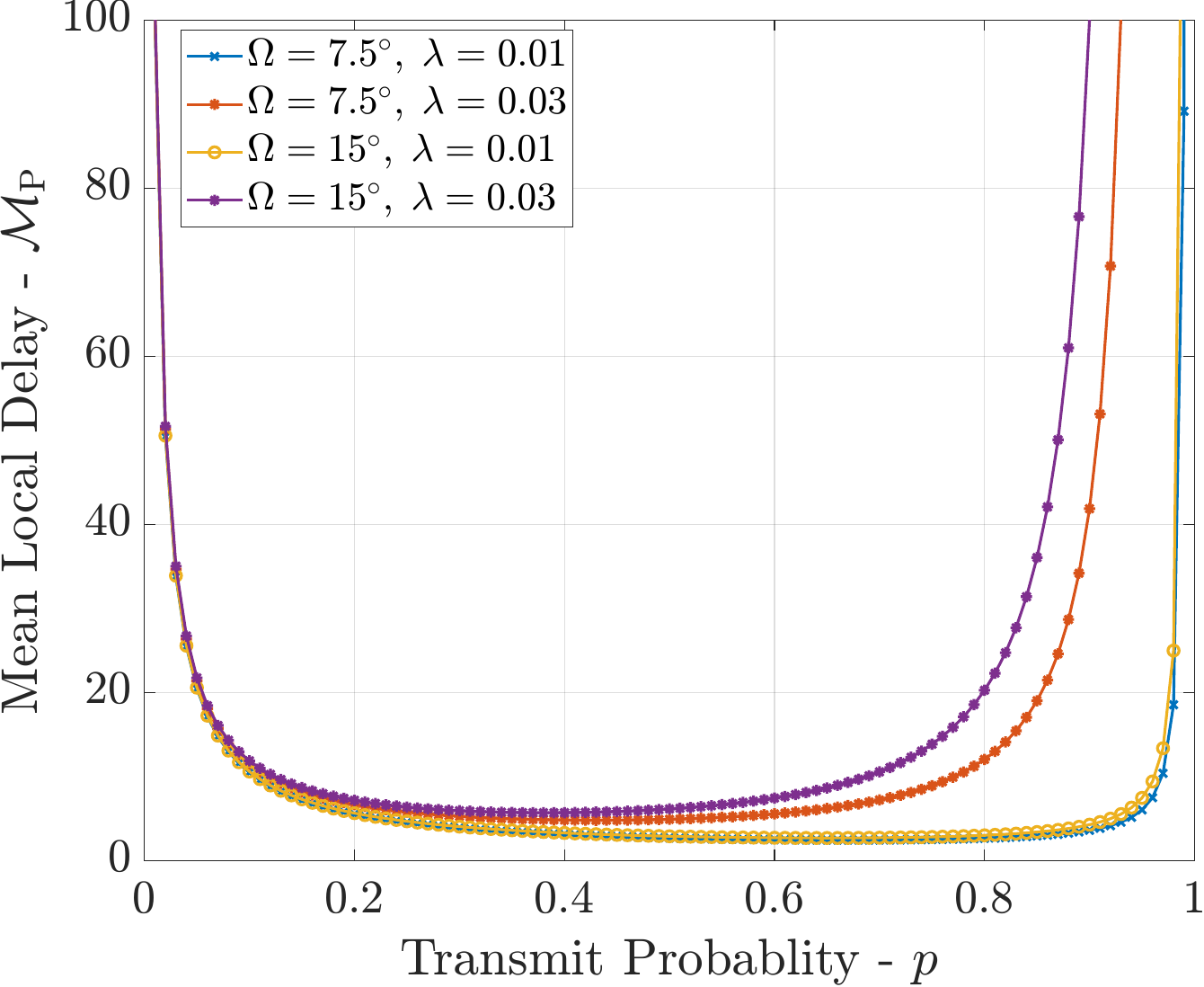}
\label{fig:delay_P_vs_p}}
\hfil
\subfloat[]
{\includegraphics[width=0.23\textwidth]{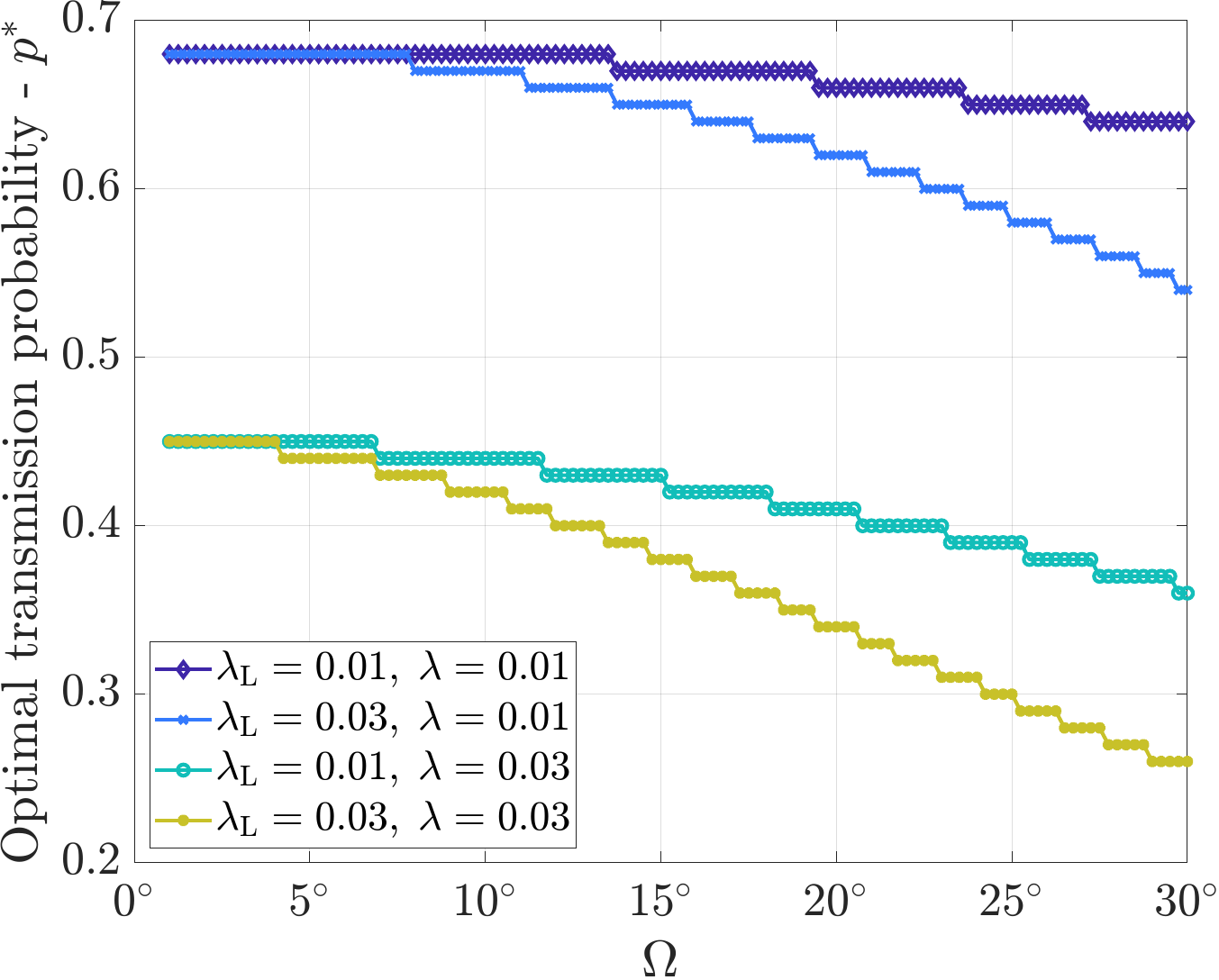}
\label{fig:P_opt_p_vs_omega}}
\hfil
\subfloat[]
{\includegraphics[width=0.23\textwidth]{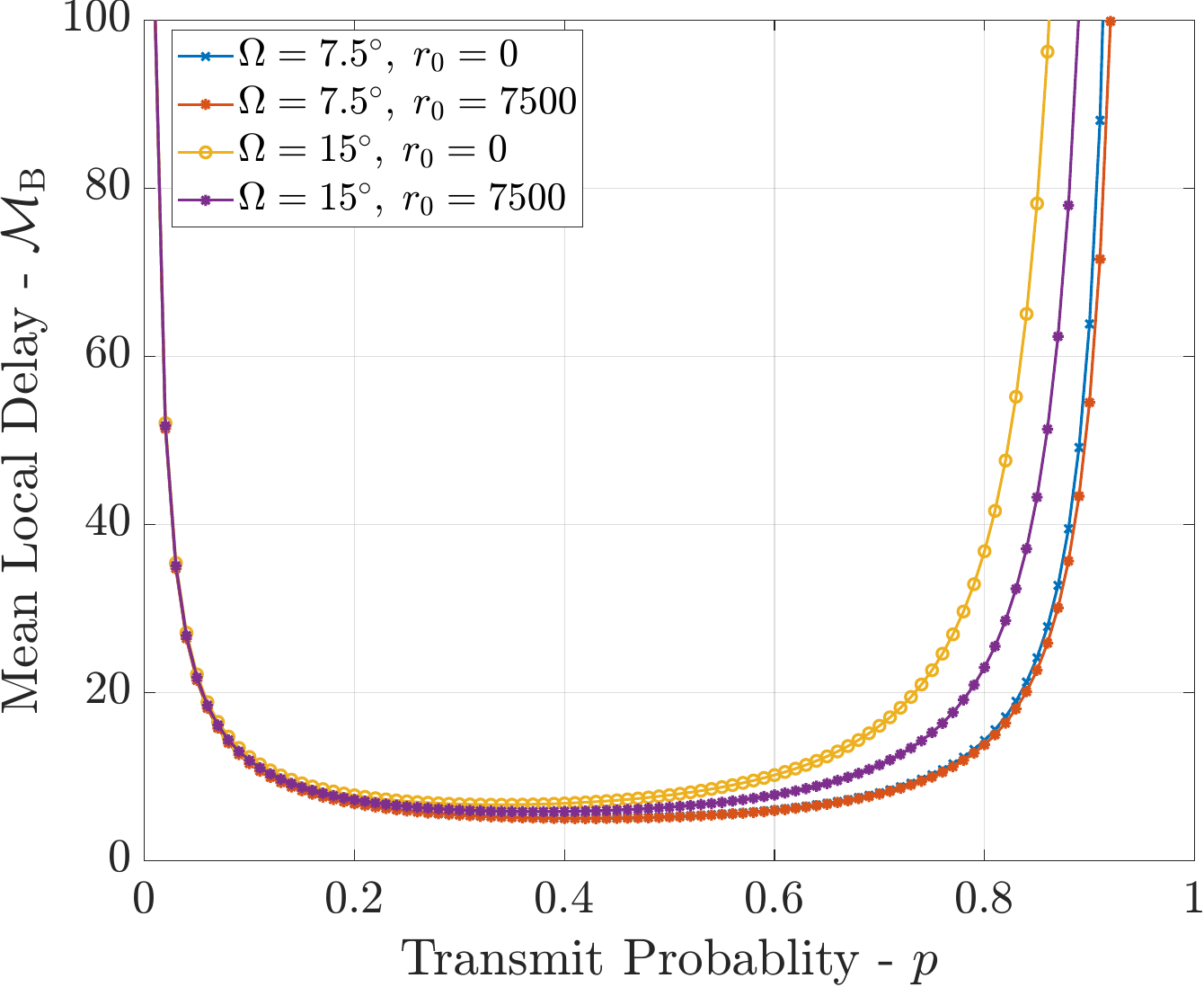}
\label{fig:delay_B_vs_p}}
\hfil
\subfloat[]
{\includegraphics[width=0.23\textwidth]{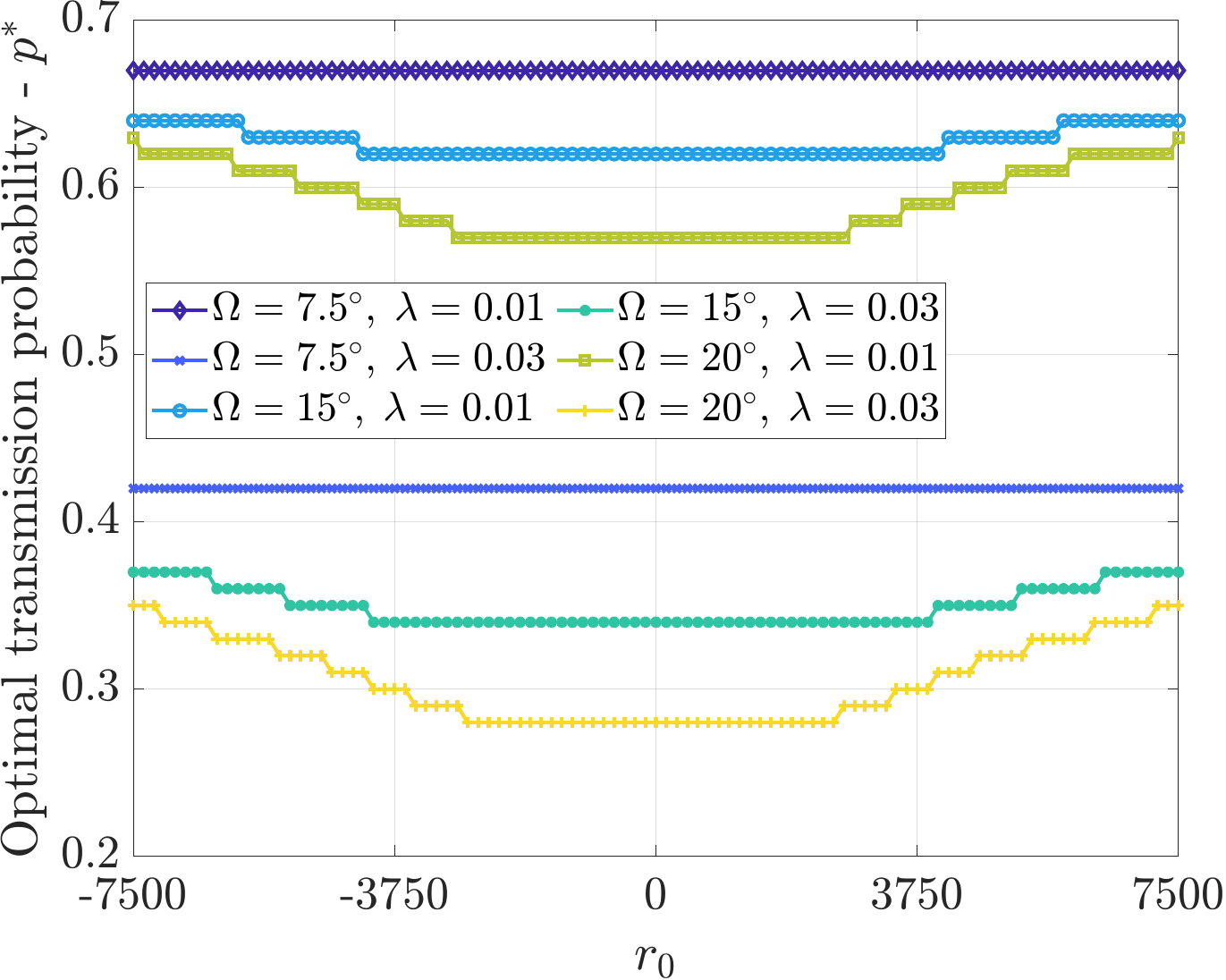}
\label{fig:B_opt_vs_vs_p}}
\caption{Transmission probability versus the mean local delay of an ego radar for (a) \ac{PLCP}, and (c) \ac{BLCP} model, (b) optimal transmission probability versus beamwidth in \ac{PLCP} model, and (d) optimal transmission probability versus $r_0$ for \ac{BLCP}.}
\label{fig:resultMetaB} 
\end{figure*}

\subsubsection{Optimization through MD}
In this section, we plot the metric which defines the delay in successfully detecting the target i.e., the mean local delay given as: $\mathcal{M}_k = \frac{1}{p} M_{{-1},k} \left(\beta_{\rm SF} = \frac{\beta}{1+\beta}\right)$. 
Consider a Bernoulli distribution; the first negative moment of this Bernoulli random variable represents the mean number of counts to get the first detection success. Therefore, if a radar transmits pulses with a probability $p$, then $\mathcal{M}_k$ represents the pulse delay in detecting a target. From Fig.~\ref{fig:delay_P_vs_p}, we observe that as the transmit probability increases, the local delay initially decreases because the transmitter tries to detect at a more frequent pace. However, a further increase in the transmit probability leads to a higher intensity of interfering automotive radar, which degrades detection probability and increases delay. This demonstrates that we can optimize the transmit probability to minimize the overall detection delay. Deriving this optimal probability is complex due to two opposing factors: increasing the transmit probability increases the frequency of detection attempts. However, this increases interference, reducing the detection success.

Figure.~\ref{fig:P_opt_p_vs_omega} shows the optimal transmit probability for minimizing the mean local delay in detecting a target w.r.t to the beamwidth of ego radar across various values $\lambda_{\rm L}$ and $\lambda$. 
As $\Omega$ increases, the optimal transmit probability $p^\ast$ initially remains constant and then declines. The rate by which the optimal transmission probability decreases is higher for larger intensity values of streets and vehicles. For $\lambda_{\rm L} = 0.01$ and $\lambda = 0.01, 0.03$, i.e., a sub-urban area, the optimal probability is greater than $0.5$ for all the values of $\Omega$, while as for the remaining values of $\lambda_{\rm L}$ and $\lambda$, optimal probability decreases as the beam width increases. Urban and dense urban areas show sensitivity to even slight variations of $\Omega$.

Likewise for the \ac{BLCP} framework,  we plot the mean local delay $\mathcal{M}_{\rm B} = \frac{1}{p} M_{{-1},{\rm B}} \left(\beta_{\rm SF}\right)$ for a ego radar w.r.t $p$ in Fig.~\ref{fig:delay_B_vs_p}. We plot $\mathcal{M}_{\rm B}$ for two different value of $\Omega = 7.5^\circ, 15^\circ$, and $r_0 = 0, 7500$. We see that an optimal transmit probability exists for which the delay is minimized; we also observe that at the city center, the ego radar experiences a higher delay in successful target detection as compared to the outskirts of $r_0 = 7500$. Leveraging the data from Fig.~\ref{fig:delay_B_vs_p}, we plot the optimal transmit probability $p^\ast$ w.r.t location of ego radar in Fig.~\ref{fig:B_opt_vs_vs_p}. For network engineers, it is favorable to know the transmission probability for a cognitive automotive radar so as to maximize the overall network performance. Figure.~\ref{fig:B_opt_vs_vs_p}, illustrates that for $\Omega = 7.5^\circ$, $p^\ast$ remains constant at all value of $r_0$ irrespective of $\lambda$. For a small $\Omega$, the ego radar experiences the same interference pattern across all locations as the interference contribution only comes due to interference present on line $L_0$. While for $\Omega = 15^\circ, 20^\circ$ the $p^\ast$ first decreases, then plateaus, and then again increases. The saturation region depends on the value of $\Omega$ and not $\lambda$. 
When ego radar operates outside the city, it may exhibit a higher $p$ due to reduced interferences compared to its operation within the city center.

\section{Conclusion}
\label{sec:con}
We presented a comprehensive analysis of automotive radar performance under the frameworks of both \ac{PLCP} and \ac{BLCP}, yielding novel insights into optimizing automotive radar parameters. Notably, we derived the optimal beamwidth and transmission probability parameters w.r.t to the dynamic location of ego radar within and outside the city. These insights, achieved through our \ac{BLCP} modeling approach, can enable cognitive radar systems to adapt and optimize detection capabilities in high-interference settings dynamically. We also demonstrated that the CM-bound method effectively reconstructs moment distributions with high accuracy, allowing for improved network efficiency and interference mitigation. By assuming a cognitively enabled vehicular radar network, we optimized the transmission frequency of the automotive radars so as to enhance target detection accuracy. 
This optimization allows for refined network efficiency and improved interference mitigation.
These conclusions suggest practical frameworks for optimizing radar designs in urban and suburban areas with high traffic variability.

\bibliographystyle{ieeetr}
\bibliography{references2}


\appendices

\section{}
\label{pr:thm_2}

The proof of the above theorem follows by integrating Lemma~\ref{le:radial_d}, where the area of integration and limits of integration depend upon the location of ego radar and the value of $R$. Let us consider the following cases.\\
\textbf{Case 1: $r_0 \in [-R_{\rm g}, R_{\rm g} - R]$.} Here, the radar sector is completely inside the city center, i.e., within the circle of radius $R_{\rm g}$. In Fig.~\ref{fig:th_3}, case 1 is marked, and the radar sector is within the city center. The radar sector is inside only if $-R_{\rm g} \leq r_0 \leq R_{\rm g} - R$. Now, the average length of line segments inside this radar sector is given by integrating $\rho(r)$ over the blue area as seen in case 1 of Fig.~\ref{fig:th_3} and is given as
\begin{align*}
    2\left(\int_{0}^{R\sin\Omega} \int_{mx+r_0}^{\sqrt{R^2 - x^2}+r_0} \frac{n_{\rm B}}{2 R_{\rm g}} {\rm d}y {\rm d}x\right) = \frac{n_{\rm B}}{2 R_{\rm g}} \Omega R^2.
\end{align*}
\begin{figure}[t]
    \centering
    \includegraphics[trim={0cm 0.7cm 0cm 0.5cm},clip,width = 0.35\textwidth]{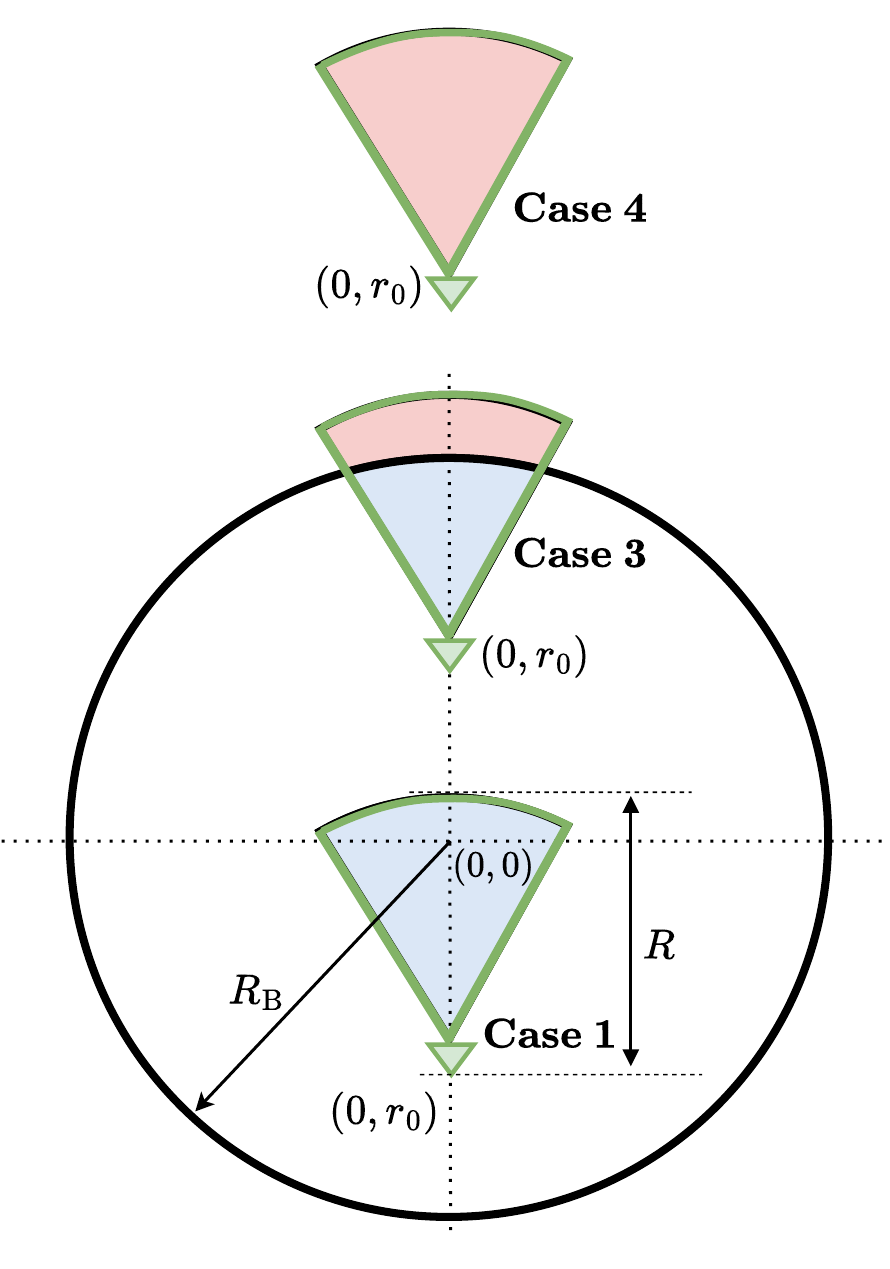}
    \caption{Different cases of the average length of lines in \ac{BLP} depending on the location of ego radar.}
    \label{fig:th_3}
\end{figure}
We will skip Case 2 and address it after Case 4.

\textbf{Case 3: $r_0 \in \left(\sqrt{R_{\rm g}^2 - \left(R \sin \Omega \right)^2} - R\cos\Omega, R_{\rm g}\right]$.} Here, the radar sector is partially inside and partially outside, as seen in Fig.~\ref{fig:th_3}. The circular arc part is outside, while the ego radar itself is present inside. This scenario occurs only if $r_0 > \sqrt{R_{\rm g}^2 - \left(R \sin \Omega \right)^2} - R\cos\Omega$. If the value of $r_0$ lies between $R_{\rm g} - R$, and $\sqrt{R_{\rm g}^2 - \left(R \sin \Omega \right)^2} - R\cos\Omega$ the circular arc of the radar sector will intersect with the circular arc of the disk of radius $R_{\rm g}$, which requires some careful analysis as we will see in Case 2. Now, in order to find $l_{\rm B}$ in this case we first integrate $\frac{n_{\rm B}}{2 R_{\rm g}}$ over the blue area which gives
\begin{align*}
    2\left(\int_0^{\rm y_C} \int_{mx+r_0}^{\sqrt{R_{\rm g}^2 - x^2}} \frac{n_{\rm B}}{2 R_{\rm g}} {\rm d}y {\rm d}x \right) &= \frac{n_{\rm B}}{2R_{\rm g}}\bigg(R_{\rm g}^2 \arcsin{\left(\frac{\rm y_C}{R_{\rm g}}\right)} \\
    &\hspace*{-2cm} + {\rm y_C}\sqrt{R_{\rm g}^2 - {\rm y^2_C}} - m{\rm y^2_C} - 2r_0{\rm y_C}\bigg).
\end{align*}
Where ${\rm y_C}=\frac{-m r_0 + \sqrt{R_{\rm g}^2(1+m^2) - r_0^2}}{1+m^2}$ which is the intersection point between ${\rm y_C} = mx + r_0$, and $x^2+{\rm y^2_C} = R_{\rm g}^2$ i.e. one of the lines of radar sector and disk of radius $R_{\rm g}$. Following this integration, we also need to find the area of $\frac{n_{\rm B}}{\pi R_{\rm g}} \arcsin{\left(\frac{R_{\rm g}}{\sqrt{x^2 + y^2}}\right)}$ over the red area, which is given as,
\begin{align*}
    &2\bigg(\int_0^{\rm y_C} \int_{\sqrt{R_{\rm g}^2 - x^2}}^{\sqrt{R^2 - x^2} + r_0} \frac{n_{\rm B}}{\pi R_{\rm g}} \arcsin{\left(\frac{R_{\rm g}}{\sqrt{x^2 + y^2}}\right)} {\rm d}y {\rm d}x + \\
    &\int_{\rm y_C}^{\rm y_B} \int_{mx+r_0}^{\sqrt{R^2 - x^2} + r_0} \frac{n_{\rm B}}{\pi R_{\rm g}} \arcsin{\left(\frac{R_{\rm g}}{\sqrt{x^2 + y^2}}\right)} {\rm d}y {\rm d}x \bigg).
\end{align*}

\textbf{Case 4: $r_0 \in \big\{(R_{\rm g},\infty) \cup (-\infty,-(R_{\rm g}+R)) \big\}$.} Here the radar sector is completely outside the disk $\mathcal{C}\left((0,0), R_{\rm g}\right)$. Fig.~\ref{fig:th_3} shows this scenario where the ego radar along with the radar sector is outside $\mathcal{C}\left((0,0), R_{\rm g}\right)$. In order to find $l_{\rm B}$, we determine the area of $\frac{n_{\rm B}}{\pi R_{\rm g}} \arcsin{\left(\frac{R_{\rm g}}{\sqrt{x^2 + y^2}}\right)}$ over the red are only, which gives $l_{\rm B}$ as,
\begin{align*}
    2 \int_0^{\rm y_B} \int_{mx+r_0}^{\sqrt{R^2 - x^2} + r_0} \frac{n_{\rm B}}{\pi R_{\rm g}} \arcsin{\left(\frac{R_{\rm g}}{\sqrt{x^2 + y^2}}\right)} {\rm d}y {\rm d}x.
\end{align*}

\textbf{Case 2} \& \textbf{6:} In case 3 we saw that if $\sqrt{R_{\rm g}^2 - \left(R \sin \Omega \right)^2} - R\cos\Omega < r_0 \leq R_{\rm g} - R$, the circular arc of radar sector will intersect with the circular arc of $\mathcal{C}\left((0,0), R_{\rm g}\right)$. This scenario gives rise to Case 2, where the ego radar is present within the disk $\mathcal{C}\left((0,0), R_{\rm g}\right)$, but the two circular arcs intersect. Likewise for case 5, if $-(R_{\rm g}+R) \leq r_0 < -\sqrt{R_{\rm g}^2 - \left(R \sin \Omega \right)^2} - R\cos\Omega$ the two circular arcs will again intersect, but here the ego radar is outside the disk. By first determining ${\rm y_A}$, i.e., the point of intersection between the $x^2 + {\rm y^2_A} = R_{\rm g}^2$, and $x^2 + ({\rm y_A}-r_0)^2 = R^2$ i.e., the disk $\mathcal{C}\left((0,0), R_{\rm g}\right)$, and the radar sector we then integrate $\rho (r)$ to find $l_{\rm B}$.

\textbf{Case 5:} This is similar to case 3, where the radar sector is partially inside the disk. Unlike case 3, here, the ego radar lies outside the disk. Thus, the circular arc of the radar sector lies inside the disk, and the remaining part lies outside. We find $l_{\rm B}$ using a similar analysis technique of case 3.

\end{document}